\begin{document}

\setcounter{page}{31}
\publyear{22}
\papernumber{2129}
\volume{187}
\issue{1}

  \finalVersionForARXIV

\title{Number Conservation via Particle Flow in \\ One-dimensional Cellular Automata}

\author{Markus Redeker\thanks{Address for correspondence:   markus2.redeker@mail.de \newline \newline
          \vspace*{-6mm}{\scriptsize{Received July 2021; \ accepted August  2022.}}}
                   \\
  Hamburg, Germany \\
  markus2.redeker@mail.de}

\runninghead{M. Redeker}{Number Conservation via Particle Flow in One-dimensional Cellular Automata}

\maketitle

\begin{abstract}
  A number-conserving cellular automaton is a simplified model for a
  system of interacting particles. This paper contains two related
  constructions by which one can find all one-dimensional
  number-conserving cellular automata with one kind of particle.

  The output of both methods is a ``flow function'', which describes
  the movement of the particles. In the first method, one puts
  increasingly stronger restrictions on the particle flow until a
  single flow function is specified. There are no dead ends, every
  choice of restriction steps ends with a flow.

  The second method uses the fact that the flow functions can be
  ordered and then form a lattice. This method consists of a recipe
  for the slowest flow that enforces a given minimal particle speed in
  one given neighbourhood. All other flow functions are then maxima of
  sets of these flows.

  Other questions, like that about the nature of non-deterministic
  number-conserving rules, are treated briefly at the end.
\end{abstract}

\section{Introduction}

Cellular automata are microscopic worlds: extremely simple spaces in
which time passes and events occur. Sometimes they are used to
simulate aspects of this world. It is therefore an interesting
question to ask which of these micro-worlds can be interpreted as
containing particles that can move, collide or stick to each other,
but are neither destructed nor created.

In this paper I provide a new answer to this question, valid for
one-dimensional automata with only one kind of particles. The new idea
is that such cellular automata are now \emph{constructed} instead of
testing whether a given system satisfies the requirements.

The cellular automata are constructed in terms of \emph{flow
  functions}. These are functions that describe how many particles
cross the boundary between two cells, depending on the neighbourhood
of this boundary. By specifying the flow functions, one can find all
number-conserving automata.

Flow functions did already occur in the literature. They were
described by Hattori and Takesue \cite[Th.~2.2]{HattoriTakesue1991}
and by Pivato \cite[Prop.~12]{Pivato2002} but not used for the
construction of transition rules for cellular automata. Imai and
Alhazov \cite{ImaiAlhazov2010} have defined flow functions for
cellular automata of radius $\frac12$ and~1, and their Proposition~3
contains a necessary and sufficient condition for number-conservation
in radius~$\frac12$ cellular automata -- it is the case of $\ell = 1$
for Theorem~\ref{thm:flow-construction} of this paper, so to speak.

This paper contains two related constructions for flow functions (and
therefore number-conserving cellular automata). The first one works by
stepwise refinement. It starts with very weak requirements on the flow
function, which then are successively strengthened until a single
function remains that satisfies all requirements. As a side effect,
the construction gives an overview over the set of all
number-conserving cellular automata; this could be used for e.\,g.\
the classification or enumeration of the automata.

To write about particle flows, one needs an appropriate notation. It
should represent all flows and only them, and ideally it should relate
in a meaningful way to the represented flows. So far, the flows were
represented by functions from cellular neighbourhoods to the integers,
which violated the second requirement because flow functions cannot
easily be distingushed from non-flow functions.

To correct this, a second construction is introduced, based on the
first one. It uses the fact that the flow functions form a lattice and
the minimal elements of this lattice have a reasonably simple form: An
arbitrary flow function can therefore always written as the maximum of
some of these minimal flows.

\medskip\smallskip\noindent \textbf{Background} \
The following paragraphs contain references to papers that are
somewhat related to this work. The list is by no means complete. For a
little bit of history of number conservation see Bhattacharjee \etal\
\cite[Sec.~4.6]{BhattacharjeeNaskarEtAl2016} or the introduction of
the related paper by Wolnik \etal\ \cite{WolnikEtAl2019}.

The most important predecessor of the current work is the paper of
Hattori and Takesue \cite{HattoriTakesue1991} about additive
invariants of cellular automata. The number of particles is one of
them. Hattori and Takesue found a simple way to test whether a
one-dimensional cellular automaton is number-conserving. This method
works only \emph{a posteriori}, but it provides the base for most
later papers about number conservation.

Boccara and Fuk\'s \cite{Boccara1998} describe the behaviour of
number-conserving cellular automata with the help of ``motion
representation'' diagrams. These are diagrams that show the motion of
individual particles in finite regions of the automaton. The authors
find a set of equations for the transition rule of a number-conserving
automaton and solve them. They also show that number-conserving rules
can be identified by verifying that the number of particles is
conserved on all circular cell configurations of a specified size.

Fuk\'s \cite{Fuks2000} shows that in one-dimensional number-conserving
automata, one can assign permanent identities to the ``particles'' and
describe the evolution of the cellular automaton in terms of particle
movements.

Pivato \cite{Pivato2002} derives several characterisations of
conservation laws in the more general context of cellular automata on
arbitrary groups.

Durand, Formenti and Róka \cite{DurandFormentiEtAl2003} collect
different definitions of number conservation and show that they are
equivalent. They also derive conditions for number conservation in
cellular automata of any dimension.

\paragraph{Overview} The rest of this article, after the introduction
and a section with definitions, consists of the following major parts.

In Section~\ref{sec:flow-conditions}, the \emph{flow function} for a
number-conserving cellular automaton is defined and the \emph{flow
  conditions} are derived: conditions on the flow functions that are
satisfied if and only if the functions belong to a number-conserving
cellular automaton.

In Section~\ref{sec:solve-flow-conditions}, all solutions to the flow
conditions are found. Examples and a diagram notation for the flow
function follow in the next section.

In Section~\ref{sec:minimal-flows}, a lattice structure for the set of
flow functions is found. This leads to the notion of a set of
``minimal'' flow functions from which all other flows can be built
with the help of lattice operations. A recipe for minimal flow
functions is found and examples for minimal flows are shown.

Two sections, one about related topics and one with open questions,
follow at the end.

\section{Definitions}

A \emph{one-dimensional cellular automaton} is a discrete dynamical
system; its states are configurations of simpler objects, the
\emph{cells}. The cells are arranged in an infinite line -- they are
indexed by $\Z$ -- and the state of each cell is an element of a
finite set $\Sigma$.

\subsection{Cells and states}

In a number-conserving cellular automaton, we picture each cell as a
container for a certain number of particles; the number of particles
it contains is part of its state. Since the number of states is
finite, there is a maximal number $C$ of particles that a cell may
contain, the \emph{capacity} of the cellular automaton. We therefore
have for each cell state $\alpha \in \Sigma$ a number
$\# \alpha \in \{ 0, \dots, C \}$, the \emph{particle content} of
$\alpha$. The expression
\begin{equation}
  \label{eq:complement}
  \#^c \alpha = C - \#\alpha
\end{equation}
stands for the \emph{complement} of the particle content: It is the
maximal number of particles one can still put into a cell of state
$\alpha$ without exceeding its capacity.

\medskip
To make things simpler, the mapping
$\# \colon \Sigma \to \{ 0, \dots, C \}$ is required to be surjective.
On the other hand, it is not necessary that $\#$ is injective; the
constructions in this paper will work even if this is not the case.

Sometimes we will take the numbers $\{ 0, \dots, C \}$ directly as
cell states. Such a state set is called a \emph{minimal set} of
states. Even then, we will keep the distinction between $\alpha$ and
$\# \alpha$, to make it clearer whether we are speaking of cells or of
the number of particles in them.

\subsection{Configurations}

We will need to speak about finite and double infinite sequences of
cell states. Finite sequences are also called \emph{strings}, while
the infinite sequences are \emph{configurations}.

A \emph{string} is an element of
$\Sigma^* = \bigcup_{k\geq 0} \Sigma^k$. If $a \in \Sigma^k$ is a
string, then $k$ is its \emph{length}, and we write $\abs{a}$ for the
length of $a$. For the string of length 0 we write $\epsilon$.

\medskip
Often we write a string $a \in \Sigma^k$ as a product of cell states,
in the form
\begin{equation}
  \label{eq:finite-string}
  a = a_0 \dots a_{k-1}\,.
\end{equation}
The number of particles in $a$ and its complement are then
\begin{equation}
  \label{eq:finite-string-particles}
  \# a = \sum_{i=0}^{k-1} \# a_i
  \qquad\text{and}\qquad
  \#^c a = \abs{a} C - \#a\,.
\end{equation}
We will also need substrings of $a$. They are specified by start point
and length, in the form
\begin{equation}
  \label{eq:substring}
  a_{m:n} = a_m \dots a_{m + n - 1}\,.
\end{equation}

\medskip
A \emph{configuration} of a cellular automaton is a doubly infinite
sequence of cell states, in the form
\begin{equation}
  \label{eq:configuration}
  a = \dots a_{-3} a_{-2} a_{-1} a_0 a_1 a_2 a_3 \dots \,.
\end{equation}
The conventions for substrings and particle content are the same for
configurations as for strings, except that $\# a$ is only defined when
the number of the $a_i$ that have a non-zero content is finite. Such a
configuration is said to have a \emph{finite particle content}.

\subsection{Evolution}

The configuration of a cellular automaton changes over time. This
behaviour -- the \emph{evolution} of the automaton -- is specified by
the \emph{global transition rule} $\hat{\phi}$, which maps the
configuration at time $t$ to the configuration at time $t + 1$.

As a function, $\hat{\phi}$ is determined by two integers $r_1$ and
$r_2$ and a \emph{local transition rule}
$\phi \colon \Sigma^{r_1 + r_2 + 1} \to \Sigma$, subject only to the
condition that $r_1 + r_2 \geq 0$. The numbers $r_1$ and~$r_2$ are the
\emph{left} and \emph{right radius} of $\phi$. (Almost always we will
also require that $r_1$ and $r_2$ are non-negative numbers. The only
exceptions are the shift rules in the example below.)

\medskip
Now we can express the value $\hat{\phi}(a)$ for an arbitrary
configuration $a$ by the condition that
\begin{equation}
  \label{eq:local-evolution}
  \hat{\phi}(a)_x = \phi(a_{x - r_1}, \dots, a_{x + r_2})
  \quad\text{for all $x \in \Z$.}
\end{equation}
If we want to use the colon form for substrings that is defined
in~(\ref{eq:finite-string-particles}), we can write this rule also in
the form $\hat{\phi}(a)_x = \phi(a_{x - r_1: r_1 + r_2 + 1})$. This
form, which emphasises the length of the local neighbourhood but is
less symmetric, will soon be used.

\medskip
(Often one only considers the symmetric case with $r_1 = r_2 = r$.
Then $r$ is called the \emph{radius} of the transition rule.)

\subsection{Rules with equivalent dynamics}
\label{sec:equivalent-dynamics}

When we create a new transition rule $\phi_v$ by replacing the radii
$r_1$ and $r_2$ of $\phi$ with $r'_1 = r_1 + v$ and $r'_2 = r_2 - v$,
the behaviour of the new transition rule $\hat{\phi}_v$ does not
differ significantly from that of $\hat{\phi}$, except for a
horizontal shift that occurs at each time step.

\medskip
It has then the dynamics
$\hat{\phi}_v(a)_x = \phi(a_{x - r_1 - v: r_1 + r_2 + 1})$, or
equivalently,
\begin{equation}
  \label{eq:phi-variation}
  \hat{\phi}_v(a)_{x + v} = \phi(a_{x - r_1: r_1 + r_2 + 1})
  \qquad\text{for all $x \in \Z$.}
\end{equation}
At every time step, the rule $\hat{\phi}_v$ moves therefore the new
cell states $v$ positions more to the right than $\hat{\phi}$. We can
say that $\hat\phi_v$ is $\hat\phi$ as seen by an observer who moves
with speed $v$ to the left.

All rules $\phi_v$ can be treated in essentially the same way,
independent of the value of $v$. We therefore introduce a new
parameter $\ell = r_1 + r_2$ to characterise $\phi$. The number $\ell$
is, anticipatingly, called the \emph{flow length} of
$\phi$.\footnote{At this point, a parameter with a value of
  $r_1 + r_2 + 1$ might look more natural, but we will see that
  $r_1 + r_2$ occurs more often in our calculations.}

\paragraph{Shift rules} The simplest example for rules with the same
dynamics are the \emph{shift rules} $\hat{\sigma}^k$, which exist for
all $k \in \Z$. The rule $\hat{\sigma}^k$ moves in each time step the
states of all cells by $k$ positions to the right.

They have $r_1 = k$, $r_2 = -k$ (therefore $\ell = 0$) and a local
transition rule of the form $\sigma^k \colon \Sigma \to \Sigma$. Since
all shift rules have the same dynamics, all $\sigma^k$ must be the
same function: It is the identity. All $\hat{\sigma}^k$ are trivially
number-conserving, independent of the particle contents of the cell
states.

This is not the only representation of the shift rules as cellular
automata. We will soon see another representation for the rules
$\hat\sigma^k$ with $k \geq 0$.

\section{The flow conditions}
\label{sec:flow-conditions}

A formal definition for number conservation is still missing. We use
this one:
\begin{quote}
  A transition rule $\hat{\phi}$ is \emph{number-conserving} if
  $\# \hat{\phi}(a) = \# a$ for every configuration $a$ with finite
  particle content.
\end{quote}
From now on we will therefore assume that every configuration has a
finite particle content.

If we divide such a configuration into two parts, we can define the
\emph{particle flow} between them. It is the number of particles that
move from the left to the right side under the action of $\hat{\phi}$.

\medskip
For the definition, consider the following setup:
\begin{equation}
  \label{eq:particle-flow-0}
  \dots u_{-3} u_{-2} u_{-1}
  \boundary
  v_0 v_1 v_2 \dots\,.
\end{equation}
It shows the cells of a configuration, split by the vertical bar into
a left part, $u$, and a right part, $v$. Its image under $\hat \phi$
shall be
\begin{equation}
  \label{eq:particle-flow-1}
  \dots u'_{-3} u'_{-2} u'_{-1}
  \boundary
  v'_0 v'_1 v'_2 \dots\,.
\end{equation}
Let now $L = \# u' - \# u$ be the change in the particle content in
the left part and $R = \# v' - \# v$ be the change in the right part.
The particle content in the configuration is preserved, therefore is
$R = -L$. The number $R$ is then the particle flow; if it is positive,
particles move to the right.

\medskip
We apply this concept to the following setup, which is the same
as~\eqref{eq:particle-flow-0}, but with the cells named differently.
\begin{equation}
  \label{eq:bipartite-setup}
  \dots x_{-2} x_{-1} \, u_0 \dots u_{r_1 - 1}
  \boundary
  u_{r_1} \dots u_{\ell - 1} \, y_\ell y_{\ell + 1} \dots\,.
\end{equation}
The two parts are separated by a vertical bar as before, but but there
are now also three regions, $x$, $u$, and $y$. Region $u$ consists of
$r_1$ cells to the right and $r_2$ cells to the left of the vertical
bar. The numbers $L$ and $R$ are defined as before.

\medskip
The interaction in a cellular automaton is local, therefore $L$ can
only depend on $x$ and $u$, and $R$ only on $u$ and $y$. But
$R = - L$, therefore they both can only depend on~$u$. So we can find
a function
\begin{equation}
  \label{eq:flow-function}
  f \colon \Sigma^\ell \to \Z
\end{equation}
with $f(u) = R = -L$. This is the function which will allow us to
describe the essential properties of a number-conserving cellular
automaton. We call $f$ the \emph{flow function} of~$\phi$.

\medskip
Next we try to reconstruct $\phi$ from $f$. To do this, we consider
the following setup with $\ell + 1$ named cells and two boundaries,
\begin{equation}
  \label{eq:tripartite-setup}
  \dots w_0 \dots w_{r_1 - 1}
  \boundary w_{r_1} \boundary
  w_{r_1 + 1} \dots w_\ell \dots\,.
\end{equation}
We are interested in the state of the cell at the centre in the next
time step. Initially, it contains $\# w_{r_1}$ particles. One step
later, there are $\# \phi(w)$ particles at this place. On the other
hand, during this transition, $f(w_{0:\ell})$ particles must have
entered the cell region through the left boundary, and $f(w_{1:\ell})$
of them must have left it to the right. The number of particles at the
centre at the next time step must therefore be
\begin{equation}
  \label{eq:particles-next-step}
  \# \phi(w) = f(w_{0:\ell}) + \# w_{r_1} - f(w_{1:\ell})\,.
\end{equation}
With this equation, applied to all neighbourhoods
$w \in \Sigma^{\ell + 1}$, we can partially reconstruct $\phi$ from
$f$. If the state set $\Sigma$ is minimal, the transition function can
even be reconstructed uniquely from the values of $\# \phi(w)$.
Otherwise, there are several different transition functions for the
same flow function. (How they are related would be the subject of
another paper.)

\medskip
Since $\phi$ can be derived from $f$, it will be enough to consider
$f$ alone. We therefore need to find all functions
$f \colon \Sigma^\ell \to \Z$ for which there is a valid transition
rule $\phi$. These are exactly those functions $f$ for which the right
side of~\eqref{eq:particles-next-step} is neither too small nor too
large. We have therefore

\begin{theorem}[Flow conditions]
  \label{thm:flow-conditions}
  A number-conserving cellular automaton for a given function
  $f \colon \Sigma^\ell \to \Z$ exists if and only if
  \begin{equation}
    \label{eq:flow-conditions}
    0 \leq f(w_{0:\ell}) + \# w_{r_1} - f(w_{1:\ell}) \leq C
    \qquad\text{for all $w \in \Sigma^{\ell + 1}$.}
  \end{equation}
\end{theorem}
The inequalities~\eqref{eq:flow-conditions} are called the \emph{flow
  conditions}.

\section{Solving the flow conditions}
\label{sec:solve-flow-conditions}

\begin{flushright}\itshape
  Great flows have little flows next to them to guide 'em, \\
  And little flows have lesser flows, but not ad infinitum.
\end{flushright}
\noindent
We will now restrict out work to rules with $r_1 = \ell$ and
$r_2 = 0$, so that particles only move to the right. This will make
induction on the neighbourhood size much simpler. It does not reduce
the generality of the conclusions because, as we have seen in
Section~\ref{sec:equivalent-dynamics}, every rule is equivalent to
such a rule. And at the end, in Section~\ref{sec:two-sided}, we will
return to the general case.

\medskip
The flow conditions now have the form
\begin{equation}
  \label{eq:forward-flow-conditions}
  0 \leq f(w_{0:\ell}) + \# w_\ell - f(w_{1:\ell}) \leq C
  \qquad\text{for all $w \in \Sigma^{\ell + 1}$.}
\end{equation}
To find solutions for them, we define some functions related to $f$,
the \emph{half-flows}. They come in two variants, as \emph{bound} and
\emph{free} half-flows, where the free half-flows are a generalisation
of the bound half-flows.

\medskip
The bound half-flows have their name from the fact that they are
constructed from a specific flow function $f$. This makes their
definition easier to understand and therefore they are introduced
first. Properties of the bound half-flows will then lead to a
definition for free half-flows, which do not refer to a flow function.
Instead, each system of free half-flows is used to \emph{construct} a
flow function.

\paragraph{Bound half-flows} We will now introduce the bound
half-flows. There are two families of them, the \emph{lower} and the
\emph{upper half-flows}, given by the equations,
\begin{subequations}
  \label{eq:half-flows}
  \begin{align}
    \label{eq:lower-half-flow}
    \fminb_k(v) &= \min \set{f(u v)\colon u \in \Sigma^{\ell - k}}, \\
    \label{eq:upper-half-flow}
    \fmaxb_k(v) &= \max \set{f(u v)\colon u \in \Sigma^{\ell - k}},
  \end{align}
\end{subequations}
with $0 \leq k \leq \ell$ and $v \in \Sigma^k$. (Note that $\fmaxb_0$
and $\fminb_0$ are in fact constants.) Much more useful are however
the inductive forms of these definitions,
\begin{subequations}
  \label{eq:half-flows-rec}
  \begin{align}
    \label{eq:lower-half-flow-rec}
    \fminb_\ell(w) &= f(w), &
    \fminb_k(v) &= \min \set{\fminb_{k+1}(\alpha v)\colon \alpha \in \Sigma}, \\
    \label{eq:upper-half-flow-rec}
    \fmaxb_\ell(w) &= f(w), &
    \fmaxb_k(v) &= \max \set{\fmaxb_{k+1}(\alpha v)\colon \alpha \in \Sigma},
  \end{align}
\end{subequations}
with $w \in \Sigma^\ell$, $\alpha \in \Sigma$ and $v \in \Sigma^k$ for
$1 \leq k \leq \ell$.

\paragraph{Example: Flow functions for shift rules} The shift rules
$\hat \sigma^\ell$ for $\ell \geq 0$ illustrate these definitions.
Each shift rule has a flow function
$f \colon \Sigma^\ell \to \{ 0, \dots, \ell C \}$ with $f(v) = \# v$
for all $v$. Its half-flows are $\fminb_k(w) = \# w$ and
$\fmaxb_k(w) = (\ell - k)C + \# w$, for $w \in \Sigma^k$. This shows
that the particle flow can be quite large, greater than the capacity
of a single cell.

\subsection{Properties of the bound half-flows}

The most important consequence of~\eqref{eq:half-flows-rec} is the
following lemma, which shows how the half-flows are related to each
other.

\medskip
It especially shows that for $1 \leq k \leq \ell$, the flow
conditions~\eqref{eq:forward-flow-conditions} split into pairs of
inequalities,
\begin{subequations}
  \label{eq:capacities}
  \begin{align}
    \label{eq:lower-capacities}
    0 \leq {} & \fminb_k(w_{0:k}) + \# w_k - \fminb_k(w_{1:k}), \\
    \label{eq:upper-capacities}
              & \fmaxb_k(w_{0:k}) + \# w_k - \fmaxb_k(w_{1:k}) \leq C.
  \end{align}
\end{subequations}
In the lemma, these inequalities follow from
in~\eqref{eq:raising-recursion}: We can get them by removing the
central terms of~\eqref{eq:lower-raising-recursion}
and~\eqref{eq:upper-raising-recursion} and rearranging the remaining
inequalities.

\begin{lemma}[Interaction of half-flows]
  \label{thm:half-flows-recursion}
  Let $0 \leq k < \ell$ and $w \in \Sigma^{k+1}$. If $f$ is the flow
  function of a number-conserving cellular automaton, then
  \begin{subequations}
    \label{eq:raising-recursion}
    \begin{alignat}{2}
      \label{eq:lower-raising-recursion}
      \fminb_k(w_{1:k}) \leq \fminb_{k+1}(w)
      &\leq \fminb_k(w_{0:k}) + \# w_k, \\
      \label{eq:upper-raising-recursion}
      \fmaxb_k(w_{0:k}) - \#^c w_k \leq \fmaxb_{k+1}(w)
      &\leq \fmaxb_k(w_{1:k})\,.
    \end{alignat}
  \end{subequations}
\end{lemma}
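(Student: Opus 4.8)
The plan is to split the four inequalities into two groups. The two inner bounds $\fmin_k(w_{1:k}) \leq \fmin_{k+1}(w)$ and $\fmax_{k+1}(w) \leq \fmax_k(w_{1:k})$ require no number conservation at all: they are read off directly from the recursive definitions~\eqref{eq:half-flows-rec}. Writing $w = w_0\, w_{1:k}$, the value $\fmin_{k+1}(w) = \fmin_{k+1}(w_0\, w_{1:k})$ is the $\alpha = w_0$ term in the minimum $\fmin_k(w_{1:k}) = \min\set{\fmin_{k+1}(\alpha\, w_{1:k}) \colon \alpha \in \Sigma}$, hence it bounds that minimum from above; symmetrically, $\fmax_{k+1}(w)$ is one of the terms in the maximum defining $\fmax_k(w_{1:k})$ and so cannot exceed it.

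For the two outer bounds the flow conditions~\eqref{eq:forward-flow-conditions} come into play, and here I would argue from the explicit form~\eqref{eq:half-flows}. To get $\fmin_{k+1}(w) \leq \fmin_k(w_{0:k}) + \# w_k$, I would choose a prefix $u \in \Sigma^{\ell - k}$ realising the minimum, so that $\fmin_k(w_{0:k}) = f(u\, w_{0:k})$, and feed the length-$(\ell+1)$ word $u\,w$ into the flow condition. Its left half gives $f(u\, w_{0:k}) + \# w_k - f\bigl((u\,w)_{1:\ell}\bigr) \geq 0$. Splitting off the first cell as $u = \alpha\, u'$ with $u' \in \Sigma^{\ell - k - 1}$, the shifted window becomes $(u\,w)_{1:\ell} = u'\,w$, so that $\fmin_{k+1}(w) \leq f(u'\,w) \leq \fmin_k(w_{0:k}) + \# w_k$, where the outer inequality is just the minimum defining $\fmin_{k+1}(w)$ over prefixes in $\Sigma^{\ell-k-1}$. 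The lower bound $\fmax_k(w_{0:k}) - \#^c w_k \leq \fmax_{k+1}(w)$ is the mirror case: take $u$ realising the maximum, apply the right half ($\cdots \leq C$) of the flow condition to $u\,w$, and use $\# w_k - C = -\#^c w_k$ to reach $\fmax_k(w_{0:k}) - \#^c w_k \leq f(u'\,w) \leq \fmax_{k+1}(w)$.

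The single delicate point is the index bookkeeping in the middle step. One must verify that appending the cell $w_k$ to the optimal length-$\ell$ word $u\,w_{0:k}$ and then sliding the flow-condition window one place to the right discards precisely the leading symbol of $u$, leaving a word $u'\,w$ whose prefix $u'$ has length $\ell - k - 1$ --- exactly the shape over which $\fmin_{k+1}$ and $\fmax_{k+1}$ extremise. Once this window-shift identity is pinned down, each outer inequality is just one half of a single flow condition, and the paired inequalities~\eqref{eq:capacities} then follow, as noted before the lemma, by deleting the central $\fmin_{k+1}(w)$ and $\fmax_{k+1}(w)$ terms and rearranging.
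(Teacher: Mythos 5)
Your proof is correct, but it takes a more direct route than the paper. The paper proves \eqref{eq:lower-raising-recursion} by a finite downward induction from $k=\ell$ to $k=1$: the induction hypothesis is the ``collapsed'' inequality $\fmin_k(w_{1:k}) \leq \fmin_k(w_{0:k}) + \# w_k$ (i.e.\ \eqref{eq:lower-capacities}), whose left side does not depend on $w_0$, so minimising the right side over $w_0$ via the recursive definition~\eqref{eq:half-flows-rec} yields the level-$(k-1)$ statement. You instead prove each outer bound in one shot from the explicit definition~\eqref{eq:half-flows}: pick the prefix $u \in \Sigma^{\ell-k}$ attaining the extremum of $f(u\,w_{0:k})$, apply a single flow condition to the length-$(\ell+1)$ word $u\,w$, and observe that the shifted window $u'\,w$ (with $u' \in \Sigma^{\ell-k-1}$) is one of the words over which $\fmin_{k+1}(w)$ resp.\ $\fmax_{k+1}(w)$ extremises; your index bookkeeping here is right. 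The trade-off is this: your argument is shorter and avoids induction, but it uses essentially that the half-flows are \emph{tight}, i.e.\ actual minima and maxima of $f$ as in~\eqref{eq:half-flows}. The paper's inductive formulation is arranged so that level $k+1$ follows from the paired inequalities~\eqref{eq:capacities} at level $k$ alone, without reference to an underlying flow --- exactly the structure reused in Theorem~\ref{thm:flow-construction}, where the half-flows are chosen freely (possibly ``loose'') and need not be extrema of anything. So your proof establishes the lemma as stated, while the paper's proof additionally prefigures the machinery needed later.
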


\begin{proof}
  The two pairs of inequalities can be proved independently of each
  other, so we begin with~\eqref{eq:lower-raising-recursion}.

\medskip
  Its proof is a finite induction from $k = \ell$ down to $k = 1$. The
  induction step consists of showing that that from
  \begin{alignat}{2}
    \label{eq:induction-step-1}
    \fminb_k(w_{1:k}) & \leq \fminb_k(w_{0:k}) + \# w_k
    &&\qquad\text{for all $w \in \Sigma^{k+1}$} \\
    \intertext{always follows}
    \label{eq:induction-step-2}
    \fminb_{k-1}(w'_{1:k-1})
    \leq \fminb_k(w')
    &\leq \fminb_{k-1}(w'_{0:k-1}) + \# w'_{k-1}
    &&\qquad\text{for all $w' \in \Sigma^k$.}
  \end{alignat}
  The induction can begin because for $k = \ell$,
  inequality~\eqref{eq:induction-step-1} is equivalent to the left
  side of the flow condition~\eqref{eq:forward-flow-conditions}.

\medskip
  To prove the induction step, note that the left side
  of~\eqref{eq:induction-step-1} is independent of the choice of
  $w_0$. At the right side of~\eqref{eq:induction-step-1}, we can
  therefore replace the term $\fminb_k(w_{0:k})$ with the smallest
  possible value it can take when we vary $w_0$ and keep $w_{1:k-1}$
  fixed. The result is $\fminb_{k-1}(w_{1:k-1})$, and we get
  $\fminb_k(w_{1:k}) \leq \fminb_{k-1}(w_{1:k-1}) + \# w_k$. This is
  already the right inequality of~\eqref{eq:induction-step-2}; we only
  need to write $w_{1:k}$ as $w'$. The left inequality,
  $\fminb_{k-1}(w'_{1:k-1}) \leq \fminb_k(w')$, follows directly from
  the inductive definition~\eqref{eq:lower-half-flow-rec} of~$\fminb_k$.

  The proof of~\eqref{eq:upper-raising-recursion} is similar. We only
  need to replace $\fminb$ with $\fmaxb$, $\#\beta$ with $-\#^c\beta$
  and revert the order of the terms in the inequalities.
\end{proof}

The next lemma is about the behaviour of $\fminb_k(v)$ and $\fmaxb_k(v)$
for a single value of $v$.

\begin{lemma}[Bounds on the half-flows]
  \label{thm:half-flow-bounds}
  Let $f$ be the flow function of a number-conserving cellular
  automaton. Then
  \begin{subequations}
    \label{eq:flow-bounds}
    \begin{align}
      \label{eq:fmin-bounds}
      0 \leq \fminb_k(v) &\leq \#v, \\
      \label{eq:fmax-bounds}
      0 \leq \fmaxb_k(v) &\leq (\ell - k)C + \#v
    \end{align}
  \end{subequations}
  and
  \begin{equation}
    \label{eq:fmin-fmax-bounds}
    \fminb_k(v) \leq \fmaxb_k(v) \leq \fminb_k(v) + (\ell - k)C
  \end{equation}
  are valid for $0 \leq k \leq \ell$ and $v \in \Sigma^k$.
\end{lemma}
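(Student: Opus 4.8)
The plan is to prove all three pairs of inequalities simultaneously by a finite downward induction on $k$, from $k = \ell$ down to $k = 0$, exploiting that the recursive definitions~\eqref{eq:half-flows-rec} build $\fmin_k, \fmax_k$ out of $\fmin_{k+1}, \fmax_{k+1}$, and that Lemma~\ref{thm:half-flows-recursion} controls the gap between two successive levels.

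For the base case $k = \ell$ we have $\fmin_\ell(v) = \fmax_\ell(v) = f(v)$, so all three statements collapse to $0 \leq f(v) \leq \#v$ for every $v \in \Sigma^\ell$. This is the one place where I must leave the pure manipulation of inequalities and appeal to number conservation itself, and it is the conceptual crux of the proof. I would take the configuration $a$ that agrees with $v$ on the cells $0, \dots, \ell-1$ and is the empty state everywhere else; it has finite content $\#v$. First I note that no particles cross any boundary of the all-empty configuration, so $f$ takes the value $0$ on the all-empty neighbourhood. Summing~\eqref{eq:particles-next-step} over all cells $x \geq \ell$ then telescopes: the $f$-terms of adjacent windows cancel, the far-right windows are empty, and the cell contents $\#a_x$ vanish for $x \geq \ell$, leaving $\sum_{x \geq \ell} \#\hat\phi(a)_x = f(v)$. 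Since each summand is nonnegative this gives $f(v) \geq 0$, and since the same sum cannot exceed the conserved total $\#v$ it gives $f(v) \leq \#v$.

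For the induction step, assume the three bounds at level $k+1$ and deduce them at level $k$ via $\fmin_k(v) = \min_\alpha \fmin_{k+1}(\alpha v)$ and $\fmax_k(v) = \max_\alpha \fmax_{k+1}(\alpha v)$ from~\eqref{eq:half-flows-rec}. The four one-sided bounds are then routine: nonnegativity of $\fmin_k$ and $\fmax_k$ passes through the minimum and maximum unchanged; for $\fmin_k(v) \leq \#v$ I select the state $\alpha$ with $\#\alpha = 0$ (which exists by assumption) and apply the inductive bound $\fmin_{k+1}(\alpha v) \leq \#(\alpha v) = \#v$; and for $\fmax_k(v) \leq (\ell-k)C + \#v$ I use $\#\alpha \leq C$ together with the inductive upper bound on $\fmax_{k+1}$. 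The inequality $\fmin_k(v) \leq \fmax_k(v)$ is immediate, being a minimum and a maximum over the same nonempty set.

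The only genuinely non-obvious step, and the part I expect to be the main obstacle, is the width estimate $\fmax_k(v) \leq \fmin_k(v) + (\ell-k)C$, since a naive combination of the one-sided bounds already proved loses an unwanted term $\#v$. Here Lemma~\ref{thm:half-flows-recursion} does exactly the right thing. Applying it to the string $w = v\beta \in \Sigma^{k+1}$ for an arbitrary $\beta \in \Sigma$, the right half of~\eqref{eq:lower-raising-recursion} gives $\fmin_k(v) \geq \fmin_{k+1}(v\beta) - \#\beta$ and the left half of~\eqref{eq:upper-raising-recursion} gives $\fmax_k(v) \leq \fmax_{k+1}(v\beta) + \#^c\beta$. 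Subtracting these and using $\#\beta + \#^c\beta = C$ converts the inductive width bound at level $k+1$ into $\fmax_k(v) - \fmin_k(v) \leq [\fmax_{k+1}(v\beta) - \fmin_{k+1}(v\beta)] + C \leq (\ell-k-1)C + C = (\ell-k)C$, which completes the step. The induction terminates at $k = 0$ and thereby covers all $0 \leq k \leq \ell$.
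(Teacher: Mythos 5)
Your proof is correct, but it takes a genuinely different route from the paper's. The paper argues directly from the global definition~\eqref{eq:half-flows}: it treats $\fmin_k(v)$ and $\fmax_k(v)$ as the extremes of the set $\set{f(uv)}$ and reads every bound off a particle-counting picture (a prefix $u$ that is empty or full gives~\eqref{eq:flow-bounds}, and tracking the fate of a suffix region $w$ of known content gives $\fmax_k(v) - \fmin_k(v) \leq \#w + \#^c w$ for~\eqref{eq:fmin-fmax-bounds} in one stroke). You instead run a downward induction on $k$ through the recursion~\eqref{eq:half-flows-rec}, isolate all the ``physics'' in the single base fact $0 \leq f(v) \leq \#v$ --- which you establish by telescoping~\eqref{eq:particles-next-step} over a finitely supported configuration, arguably more rigorously than the paper's informal ``particles only move to the right'' --- and obtain the width bound by a one-step application of Lemma~\ref{thm:half-flows-recursion}; your single step, iterated $\ell - k$ times, is in effect the paper's direct estimate. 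The trade-off: the paper's version is shorter and self-contained, while yours makes the logical dependencies explicit and is legitimately (non-circularly) allowed to lean on Lemma~\ref{thm:half-flows-recursion}, since that lemma is proved beforehand from the flow conditions alone. One cosmetic imprecision: in the recursive picture, $\fmin_k(v)$ and $\fmax_k(v)$ are a minimum and a maximum over two \emph{different} sets, $\set{\fmin_{k+1}(\alpha v)}$ and $\set{\fmax_{k+1}(\alpha v)}$, so $\fmin_k(v) \leq \fmax_k(v)$ needs the inductive hypothesis $\fmin_{k+1} \leq \fmax_{k+1}$ (or a one-line appeal to~\eqref{eq:half-flows}), not just the remark that they range over ``the same nonempty set''.
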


\begin{proof}
  The first two inequalities follow directly from the definitions of
  $\fminb_k(v)$ and $\fmaxb_k(v)$ in~\eqref{eq:half-flows}. For both
  half-flows, we have to consider all flows $f(u v)$ with
  $u \in \Sigma^{\ell-k}$, i.\,e.\ the following setup:
  \begin{equation}
    \label{eq:half-flow-setup}
    \dots u_1 \dots u_{\ell-k} v_1 \dots v_k \boundary \dots
  \end{equation}
  Only the particles in $u$ and $v$ may cross the boundary, and they
  may only move to the right. All possible values for $f(u v)$ are
  therefore non-negative, and the same is true for $\fminb_k(v)$ and
  $\fmaxb_k(v)$. This proves the two lower bounds
  in~\eqref{eq:flow-bounds}. For the upper bound on $\fminb_k(v)$, we
  note that the set of all $f(u v)$ includes the case with $\#u = 0$.
  Then at most $\#v$ particles may cross to the right, which means
  that also $\fminb_k(v) \leq \#v$. On the other hand it is also
  possible that all cells in $u$ contain $C$ particles and that
  $\# u = (l - k) C$. Then $(\ell - k)C + \#v$ particles may cross the
  boundary, which explains the upper bound for $\fmaxb_k(v)$.

\medskip
  The left side of~\eqref{eq:fmin-fmax-bounds} is clear; the right
  side can be proved with another setup:
  \begin{equation}
    \label{eq:upper-bound-setup}
    \dots v_1 \dots v_k \boundary
    w_1 \dots w_{\ell-k} \boundary \dots
  \end{equation}
  Here we assume that we know $f(v w)$ and try to find upper and lower
  bounds for the number of particles that cross the left boundary. The
  highest possible number of particles is $f(v w) + \#^c w$, because
  then the $w$ region will be completely filled up in the next time
  step. The smallest particle flow is $f(v w) - \# w$, because then
  the $w$ region will be empty. This means that
  $\fmaxb_k(v) \leq f(v w) + \#^c w$ and
  $\fminb_k(v) \geq f(v w) - \# w$. When we subtract these
  inequalities, we get
  $\fmaxb_k(v) - \fminb_k(v) \leq \#^c w + \# w = (\ell - k) C$, from
  which the right side of~\eqref{eq:fmin-fmax-bounds} follows.
\end{proof}

\subsection{The construction of all flows}
\label{sec:all-flows}

Now we can define the free half-flows. They provide us with a
practical method to find the flow functions among all the functions
$f \colon \Sigma^\ell \to \{ 0, \dots, \ell C \}$. This method is an
algorithm of stepwise refinement, where at each step a choice takes
place that never needs to be taken back.

\medskip
A system of \emph{free half-flows} is then a sequence
$(\fmin_k, \fmax_k)_{0 \leq k \leq \ell}$ of functions that satisfy
the conditions of Lemma~\ref{thm:half-flows-recursion}
and~\ref{thm:half-flow-bounds}. This means that
\begin{subequations}
  \label{eq:free-halfflows}
  \begin{alignat}{3}
    \label{eq:lower-raising-recursion2}
    \fmin_k(w_{1:k}) &\leq \fmin_{k+1}(w)
    \leq \fmin_k(w_{0:k}) + \# w_k, \\
    \label{eq:upper-raising-recursion2}
    \fmax_k(w_{0:k}) - \#^c w_k &\leq \fmax_{k+1}(w)
    \leq \fmax_k(w_{1:k}) \\
    \intertext{must be true for $0 \leq k < \ell$ and $w \in
      \Sigma^{k+1}$ and}
    \label{eq:fmin-bounds2}
    0 &\leq \fmin_k(v) \leq \#v, \\
    \label{eq:fmax-bounds2}
    0 &\leq \fmax_k(v) \leq \#v + (\ell - k)C, \\
    \label{eq:fmin-fmax-bounds2}
    \fmin_k(v) &\leq \fmax_k(v) \leq \fmin_k(v) + (\ell - k)C
  \end{alignat}
\end{subequations}
must be true for $0 \leq k \leq \ell$ and $v \in \Sigma^k$. (To
distinguish between the two kinds of half-flows, we write free
half-flows with a bar and bound half-flows a tilde.)

\medskip
This definition ensures that each system of bound half-flows is also a
system of free half-flows and also that free half-flows can be chosen
without reference to a given flow function.

The following theorem then states how flow functions are constructed.

\begin{theorem}[Flow construction]
  \label{thm:flow-construction}
  All solutions of the flow
  conditions~\eqref{eq:forward-flow-conditions} can be found by
  constructing a sequence of free half-flows
  \begin{equation}
    \label{eq:half-flow-sequence}
    \fmin_0, \fmax_0, \fmin_1, \fmax_1,
    \dots, \fmin_\ell, \fmax_\ell \,.
  \end{equation}
  This means that if $j < \ell$ and $\fmin_0$, $\fmax_0$, $\fmin_1$,
  $\fmax_1$, $\dots$, $\fmin_j$, $\fmax_j$ is a sequence of
  functions\footnote{The empty sequence is here included.} satisfying
  the conditions in~\eqref{eq:free-halfflows}, it is always possible
  to extend it with two functions $\fmin_{j+1}$ and $\fmax_{j+1}$ so
  that the new sequence again satisfies~\eqref{eq:free-halfflows}.

\medskip
  At the end, we have $\fmin_\ell = \fmax_\ell$ and the function
  $f = \fmin_\ell = \fmax_\ell$ is the constructed flow. All flow
  functions can be found in this way.
\end{theorem}

\begin{proof}
  We have already seen that for every flow there is a sequence of free
  half-flows that satisfies the required inequalities. Therefore all
  possible flows can be reached by the construction of the theorem.

  It remains to show that the construction can always be completed. We
  will now follow it step by step and show that at each step is always
  possible.

\medskip
  The construction starts with the choice of two constants $\fmin_0$
  and $\fmax_0$ that satisfy the conditions
  in~\eqref{eq:free-halfflows}. For $k = 0$, the conditions reduce to
  \begin{equation}
    \label{eq:constr-init}
    \fmin_0 = 0
    \qquad\text{and}\qquad
    0 \leq \fmax_0 \leq \ell C,
  \end{equation}
  so the first step can always be done.

\medskip
  Now assume that $k < \ell$, that the half-flows $\fmin_k$ and
  $\fmax_k$ are already constructed and we want to construct
  $\fmin_{k+1}$ and $\fmax_{k+1}$. We must then find solutions for the
  inequalities~\eqref{eq:free-halfflows}. I will now rewrite them in
  the order in which we will try to find solutions for them.

  The result is the following system of inequalities. For any
  $v \in \Sigma^{k+1}$ it contains all conditions on $\fmin_{k+1}(v)$
  and $\fmax_{k+1}(v)$.
  \begin{subequations}
    \label{eq:constr-step}
    \begin{alignat}{2}
      \label{eq:constr-fmin-longer}
      \fmin_k(v_{1:k}) \leq \fmin_{k+1}(v)
      &\leq \fmin_k(v_{0:k}) + \# v_k, \\
      \label{eq:constr-fmin-bounds}
      0 \leq \fmin_{k+1}(v)
      &\leq \#v, \\[1ex]
      \label{eq:constr-fmax-longer}
      \fmax_k(v_{0:k}) - \#^c v_k \leq \fmax_{k+1}(v)
      &\leq \fmax_k(v_{1:k}) \\
      \label{eq:constr-fmax-bounds}
      0 \leq \fmax_{k+1}(v)
      &\leq \#v + (\ell - k - 1)C \\[1ex]
      \label{eq:constr-fmin-fmax}
      \fmin_{k+1}(v) \leq \fmax_{k+1}(v)
      &\leq \fmin_{k+1}(v) + (\ell - k - 1)C,
    \end{alignat}
  \end{subequations}
  There are $\abs{\Sigma}^{k+1}$ of such systems of inequalities, but
  each pair of $\fmin_{k+1}(v)$ and $\fmax_{k+1}(v)$ occurs in only
  one of them. The values of $\fmin_{k+1}(v)$ and $\fmax_{k+1}(v)$ can
  therefore be chosen independently for each $v$.

\medskip
  First we note that each of the requirements
  in~\eqref{eq:constr-step} can be satisfied individually. The only
  cases in which this is not obvious are~\eqref{eq:constr-fmin-longer}
  and~\eqref{eq:constr-fmax-longer}. But when we remove the central
  term in both, a half-flow condition~\eqref{eq:capacities} for
  $\fmin_k$ or $\fmax_k$ remains, which is true by induction. The
  upper and lower bounds on $\fmin_{k+1}$ and $\fmax_{k+1}$ in each of
  the two requirements are therefore consistent with each other and we
  can find solutions for them.

\medskip
  Next we verify that all requirements in~\eqref{eq:constr-step}
  except the right inequality in~\eqref{eq:constr-fmin-fmax} can be
  satisfied together. We do that by choosing
  \begin{align}
    \label{eq:candidate-solution}
    \fmin_{k+1}(v) &= \fmin_k(v_{1:k}),
    &\fmax_{k+1}(v) &= \min\{ \fmax_k(v_{1:k}), \#v + (\ell - k - 1)C \}
  \end{align}
  as our candidate for a solution. This is the ``lazy'' solution in
  which $\fmin_{k+1}(v)$ is as small and $\fmax_{k+1}(v)$ as large as
  possible. The first two requirements in~\eqref{eq:constr-step} are
  then clearly satisfied. For the next two, we need only to verify the
  left inequality of~\eqref{eq:constr-fmax-longer}. It is equivalent
  to the two inequalities
  \begin{align}
    \fmax_k(v_{0:k}) - \#^cv_k &\leq \fmax_k(v_{1:k}),
    &\fmax_k(v_{0:k}) - \#^cv_k &\leq \#v + (\ell - k - 1)C\,.
  \end{align}
  The left inequality is part of~\eqref{eq:constr-fmax-longer} and
  true by induction. For the right inequality, we add on both sides
  $\#^c v_k$ and get $\fmax_k(v_{0:k}) \leq \#v_{0:k} + (\ell - k)C$.
  This is the version of~\eqref{eq:constr-fmax-bounds} with $v_{0:k}$
  instead of $v$ and again true by induction.

\medskip
  What remains is the right inequality in~\eqref{eq:constr-fmin-fmax}.
  It may be violated by the solution
  candidate~\eqref{eq:candidate-solution}. If that is the case, we
  move the candidates for $\fmin_{k+1}(v)$ and $\fmax_{k+1}(v)$
  stepwise towards each other until we have either found a full
  solution or cannot continue. If we cannot continue, $\fmin_{k+1}(v)$
  cannot be made larger or $\fmax_{k+1}(v)$ smaller. That is, one of
  the left inequalities of~\eqref{eq:constr-fmin-longer}
  and~\eqref{eq:constr-fmin-bounds} and one of the right inequalities
  of~\eqref{eq:constr-fmax-longer} and~\eqref{eq:constr-fmax-bounds}
  must have become equalities, or $\fmin_{k+1}(v) = \fmax_{k+1}(v)$.

  Three cases can be excluded. If $\fmin_{k+1}(v) = \fmax_{k+1}(v)$,
  then all of~\eqref{eq:constr-fmin-fmax} is satisfied and we have a
  solution to the full system of inequalities. If
  $\fmax_{k+1}(v) = 0$, then $\fmin_{k+1}(v) = 0$ too and we have
  again a solution. And $\fmin_{k+1}(v) = \# v$ cannot occur without
  $\fmin_{k+1}(v) = \fmin_k(v_{0:k}) + \# v_k$ because by induction,
  $\fmin_k(v_{0:k}) \leq \# v_{0:k}$ and therefore
  $\fmin_k(v_{0:k}) + \# v_k \leq \# v$.

  So we must have $\fmin_{k+1}(v) = \fmin_k(v_{0:k}) + \# v_k$ and
  $\fmax_{k+1}(v) = \fmax_k(v_{0:k}) - \#^c v_k$. But then
  \begin{align*}
    \fmax_{k+1}(v)
    &= \fmax_k(v_{0:k}) - \#^c v_k \\
    &\leq \fmin_k(v_{0:k}) + (\ell - k)C - C + \# v_k \\
    &= \fmin_{k+1}(v) + (\ell - k - 1)C\,.
  \end{align*}
  So the right inequality of~\eqref{eq:constr-fmin-fmax} can always be
  satisfied.

\medskip
  In other words, the step from $\fmin_k$ and $\fmax_k$ to
  $\fmin_{k+1}$ and $\fmax_{k+1}$ is always possible, and $\fmin_\ell$
  and $\fmax_\ell$ can always be constructed. That they are equal
  follows from~\eqref{eq:fmin-fmax-bounds}.
\end{proof}

\section{Examples and diagrams}

We will now, as an illustration for
Theorem~\ref{thm:half-flows-recursion}, construct all
number-conserving elementary cellular automata.

This would lead to a quite voluminous computation if it were done
directly. So I will at first introduce a diagram notation for all the
flows and half-flows of a number-conserving cellular automaton. With
these \emph{box diagrams}, the computation can be shown in a
reasonable amount of space.

\subsection{A rule with a maximal half-flow}

Before we can begin to work with elementary cellular automata, I will
illustrate the method by applying it to a simpler example. We will now
construct a flow function $f$ for which the half-flow $\fmax_0$ takes
the largest possible value. We will do this for the minimal state set
with capacity $C = 1$, i.\,e.\ the set $\Sigma = \{ 0, 1 \}$, so that
every cell contains at most one particle. The diagrams for the
computation are shown in Figure~\ref{fig:sigma-3}.

\medskip
The construction starts with the box at the top; that at the bottom
represents the resulting flow function. I will first describe how the
diagrams must be read and the how the construction proceeds.
\begin{itemize}
\item Each box stands for a pair of half-flow values, $\fmin_k(w)$ and
  $\fmax_k(w)$ for some $w$. The box for $\fmin_k(w)$ and $\fmax_k(w)$
  is called the \emph{$k$-box} for $w$.

  In the construction, it influences the flow values for all
  neighbourhoods with $w$ at its right end. In the figure, one can
  find these neighbourhoods at the bottom, deep below the $k$-box.

\item The values of $\fmin_k(w)$ and $\fmax_k(w)$ are given by the
  position of the lower and the upper edge of the $k$-box for $w$.

\item The values of $w$ for the innermost $k$-boxes in a diagram
  appear at its bottom, below the box.

\item The dots in the bottom diagram stand for the value of $f$ and at
  the same time for the values of the half-flows $\fmin_\ell$ and
  $\fmax_\ell$, since they all are the same.

  Under each dot stands the neighbourhood which it represents. (The
  neighbourhoods are ordered lexicographically by their
  mirror-images.)

\item The grey lines in the diagrams are restrictions for the upper
  and lower edges of the $(k + 1)$-boxes that will be constructed in
  the next step.
\end{itemize}
The construction itself obeys the following rules, which are a
``graphical version'' of Theorem~\ref{thm:flow-construction}.
\begin{itemize}
\item All $k$-boxes except the outermost one must be drawn directly
  inside a $(k+1)$-box. This ensures that the conditions
  $\fmin_k(v_{1:k}) \leq \fmin_{k+1}(v)$ and
  $\fmax_{k+1}(v) \leq \fmax_k(v_{1:k})$
  of~\eqref{eq:constr-fmin-longer} and~\eqref{eq:constr-fmax-longer}
  are satisfied.

\item The maximal height of a $k$-box is $\ell - k$. This ensures
  that~\eqref{eq:constr-fmin-fmax} is satisfied.

\item The upper edge of a $k$-box is not higher than the greatest
  particle content of all the neighbourhoods that it influences.

  This ensures condition~\eqref{eq:constr-fmax-bounds}, because the
  upper edge of the $k$-box $v$ stands for $\fmax_k(v)$, which is the
  maximum of all neighbourhoods of the form $u v$ with
  $u \in \Sigma^{\ell-k}$: Their highest particle content is therefore
  $(\ell - k)C + \# v$, as required in the condition.

\item When a new $k$-box is constructed, its dimensions are partially
  determined by the pair of grey lines in the diagram above it that
  have the same horizontal extension as itself. The new $k$-box must
  have an upper edge that is not lower than the upper grey line, and a
  lower edge that is not higher than the lower grey line.

  This is to make sure that the conditions
  $\fmin_{k+1}(v) \leq \fmin_k(v_{0:k}) + \# v_k$ and
  $\fmax_k(v_{0:k}) - \#^c v_k \leq \fmax_{k+1}(v)$
  of~\eqref{eq:constr-fmin-longer} and~\eqref{eq:constr-fmax-longer}
  are satisfied.

\item Finally, to keep the promise we just made, we need to draw new
  grey lines at the right positions in the new diagram. The grey lines
  that we need to draw are copies of the arrangement of $k$-boxes that
  we just have constructed -- but with modifications. (a) The first
  one is that we create $C$ copies of the arrangement and squeeze them
  horizontally by a factor $C^{-1}$, so that they again fit into the
  same space. One will then be above the neighbourhoods that have 0 as
  its rightmost state, one above those that end with 1, and so on. (b)
  We now move the arrangements upward by rising amounts: The first
  copy not at all, the next one by 1, and so on. (c) Then we replace
  the boxes in all arrangements by pairs of grey lines: the lower edge
  by a grey line at the same position, but the upper edge by an upper
  grey line \emph{that is $C$ positions below}.

\medskip
  To justify these rules, write the inequalities
  from~\eqref{eq:constr-fmin-longer} and~\eqref{eq:constr-fmax-longer}
  that were mentioned above in the form
  \begin{subequations}
    \label{eq:constr-step-mod}
    \begin{alignat}{2}
      \label{eq:constr-fmax-bounds-mod}
      \fmax_{k+1}(v)
      &\geq \fmax_k(v_{0:k}) + \# v_k - C, \\
      \label{eq:constr-fmin-longer-mod}
      \fmin_{k+1}(v)
      &\leq \fmin_k(v_{0:k}) + \# v_k\,.
    \end{alignat}
  \end{subequations}
  The upper inequality describes the position of the upper grey line,
  $\fmax_{k+1}(v)$, in terms if the upper edge of the $k$-box for
  $v_{0:k}$ and other parameters; the lower inequality is about the
  lower grey line and the lower edge of the new $k$-box. We can then
  see that (a) the terms $\fmax_k(v_{0:k})$ and $\fmin_k(v_{0:k})$,
  which stand for the boundaries of the existing $k$-boxes, occur $C$
  times, once for each value of $v_k$, (b) the addition of $\# v_k$
  means that each line pair is is shifted upward by this amount, and
  (c) the subtraction of $C$ moves each upper grey line down by this
  amount.
\end{itemize}

\begin{figure}[!ht]
\vspace*{2mm}
  \centering
  \scalebox{1.1}{
  \begin{tikzpicture}[x=25pt, y=15pt,
    font=\tiny,
    bounds/.style={thin, lightgray, rounded corners=2pt},
    box/.style={draw, thick, fit={#1}},
    box0/.style={box={#1}, inner sep=8pt, rounded corners=3pt},
    box1/.style={box={#1}, inner sep=6pt, rounded corners=3pt},
    box2/.style={box={#1}, inner sep=4pt, rounded corners=3pt}]
    \def\annotate#1#2#3{%
      \draw[thin, gray] (#1) -- (8.5, #2)
          node[right, black, align=left, font=\footnotesize] {#3}}
    \def\downmargin{-0.95}
    \def\arrowlength{.3}
    \def\yaxis#1{\foreach \y in {0, ..., #1}
                   \node at (-.6, \y) {\y};}
    \def\lbounds(#1,#2;#3,#4)#5{
      \def\r{2pt}
      \path (#1,#2) -- ++(-\r,0) coordinate (a)
                    -- ++(0,-\r) coordinate (b);
      \path (#3,#2) -- ++(\r,0) coordinate (d)
                    -- ++(0,-\r) coordinate (c);
      \draw[bounds] (a) -- (b) -- (c) -- (d);
      \path (#1,#4) -- ++(-\r,0) coordinate (a)
                    -- ++(0,\r) coordinate (b);
      \path (#3,#4) -- ++(\r,0) coordinate (d)
                    -- ++(0,\r) coordinate (c);
      \draw[bounds] (a) -- (b) -- (c) -- (d);
    }
    \def\sbounds(#1;#2,#3)#4{
      \lbounds(#1,#2;#1,#3){#4}
    }
    \def\value(#1,#2)#3{
      \draw[fill] (#1, #2) circle (1pt);
      \node (#3) at (#1, \downmargin) {#3};
      }
    \def\zero{\yaxis3
              \node[box0={(0,0) (7,3)}] {};}
    \def\one{\node[box1={(0,0) (3,2)}] {};
             \node[box1={(4,1) (7,3)}] {};}
    \def\two{\node[box2={(0,0) (1,1)}] {};
             \node[box2={(2,1) (3,2)}] {};
             \node[box2={(4,1) (5,2)}] {};
             \node[box2={(6,2) (7,3)}] {};}
    \matrix[row sep=10pt] {
      \zero
      \lbounds(0,0;3,2){0}
      \lbounds(4,1;7,3){1}
      \path (7,3)
          -- +(8pt,8pt) coordinate (a)
          -- +(4pt,1pt) coordinate (b);
      \annotate{a}{4.1}{$\fmax_0=3$};
      \annotate{b}{3.2}{$\fmax_1(1)\geq 3$};
      \path (7,1) -- +(4pt,-2pt) coordinate (c);
      \path (7,0) -- +(8pt,-8pt) coordinate (d);
      \annotate{c}{0.7}{$\fmin_1(1)\leq 1$};
      \annotate{d}{-0.9}{$\fmin_0=0$};
      \\
      \zero\one
      \lbounds(0,0;1,1){00}
      \lbounds(2,1;3,2){10}
      \lbounds(4,1;5,2){01}
      \lbounds(6,2;7,3){11}
      \node at (1.5, \downmargin) {0};
      \node at (5.5, \downmargin) {1};
      \path (7,1) -- +(2pt,-7pt) coordinate (a);
      \annotate{a}{0.35}{This 1-box\dots};
      \\
      \zero\one\two
      \sbounds(0;0,0){000}
      \sbounds(1;1,1){100}
      \sbounds(2;1,1){010}
      \sbounds(3;2,2){110}
      \sbounds(4;1,1){001}
      \sbounds(5;2,2){101}
      \sbounds(6;2,2){011}
      \sbounds(7;3,3){111}
      \node at (0.5, \downmargin) {00};
      \node at (2.5, \downmargin) {11};
      \node at (4.5, \downmargin) {01};
      \node at (6.5, \downmargin) {11};
      \path (7,2) -- +(0,-4pt) coordinate (a);
      \annotate{a}{1.4}{$\fmin_2(11)=2$};
      \\
      \zero\one\two
      \value(0,0){000}
      \value(1,1){100}
      \value(2,1){010}
      \value(3,2){110}
      \value(4,1){001}
      \value(5,2){101}
      \value(6,2){011}
      \value(7,3){111}
      \path (7,3) -- +(2pt,0pt) coordinate (a);
      \annotate{a}{3.4}{$f_3(111)=3$};
      \node[fit={(001) (111)}, draw=lightgray, inner sep=-1pt] (b) {};
      \annotate{b}{-0.7}{\dots influences\\
        these neigh-\\bourhoods};
      \\
    };
  \end{tikzpicture} }\vspace*{-2mm}
  \caption{Construction of a flow with $\fmax_0 = 3$, with some
    annotations.}
  \label{fig:sigma-3}\vspace*{-4mm}
\end{figure}

This is the way in which the diagrams in Figure~\ref{fig:sigma-3} were
constructed. As an example for the most difficult part, namely the
construction of the thin lines, we will now look at the first diagram
in detail. In it, we see two pairs of grey lines that form the
``shadows'' of two 1-boxes. Both are smaller versions of the outer
0-box with the upper edge lowered by $C = 1$; the right one is also
moved upward by 1.

We can also see that in the first three diagrams of
Figure~\ref{fig:sigma-3}, the arrangement of grey lines in the right
half is a shifted version of the arrangement of grey lines in the
right half.

In the diagrams of Figure~\ref{fig:sigma-3}, the pairs of thin lines
always have the maximal possible distance, so there is no choice in
the following step. This means that the flow function for
$\hat\sigma^3$ is the only one with $\fmax = 3$ -- an observation that
can be easily extended to all $\hat\sigma^\ell$ with $\ell \geq 0$.

\subsection{Elementary cellular automata}

Now we can begin with the construction of all number-conserving
elementary cellular automata. Elementary cellular automata, or ECA
\cite{Wolfram1983}, are simply the one-dimensional cellular automata
with radius $r_1 = r_2 = 1$ and state set $\Sigma = \{ 0, 1 \}$. They
therefore have a transition rule $\phi\colon \Sigma^3 \to \Sigma$ and
a flow function of width $\ell = 2$ for capacity $C = 1$. We must now
find all number-conserving flows.

\begin{figure}[t]
  \centering
  \begin{tikzpicture}[x=20pt, y=15pt,
    font=\tiny,
    caption/.style={font=\footnotesize,anchor=west},
    bounds/.style={thin, lightgray, rounded corners=2pt},
    box/.style={draw, thick, fit={#1}},
    box0/.style={box={#1}, inner sep=6pt, rounded corners=3pt},
    box1/.style={box={#1}, inner sep=4pt, rounded corners=3pt}]
    \def\downmargin{-.7}
    \def\arrowlength{.3}
    \def\yaxis#1{\foreach \y in {0, ..., #1}
                   \node at (-.55, \y) {\y};}
    \def\lbounds(#1,#2;#3,#4)#5{
      \def\r{2pt}
      \path (#1,#2) -- ++(-\r,0) coordinate (a)
                    -- ++(0,-\r) coordinate (b);
      \path (#3,#2) -- ++(\r,0) coordinate (d)
                    -- ++(0,-\r) coordinate (c);
      \draw[bounds] (a) -- (b) -- (c) -- (d);
      \path (#1,#4) -- ++(-\r,0) coordinate (a)
                    -- ++(0,\r) coordinate (b);
      \path (#3,#4) -- ++(\r,0) coordinate (d)
                    -- ++(0,\r) coordinate (c);
      \draw[bounds] (a) -- (b) -- (c) -- (d);
    }
    \def\sbounds(#1;#2,#3)#4{
      \lbounds(#1,#2;#1,#3){#4}
    }
    \def\value(#1,#2)#3{
      \draw[fill] (#1, #2) circle (1pt);
      \node at (#1, \downmargin) {#3};
    }
    \def\neighbourhoods{%
      \node at (0.5, \downmargin) {0};
      \node at (2.5, \downmargin) {1};
    }
    \def\zeroa{\node[box0={(0,0) (3,2)}] {};}
    \def\oneaa{\node[box1={(0,0) (1,1)}] {};
               \node[box1={(2,1) (3,2)}] {};}
    \def\zerob{\node[box0={(0,0) (3,1)}] {};}
    \def\oneba{\node[box1={(0,0) (1,1)}] {};
               \node[box1={(2,0) (3,1)}] {};}
    \def\onebb{\node[box1={(0,0) (1,1)}] {};
               \node[box1={(2,1) (3,1)}] {};}
    \def\onebc{\node[box1={(0,0) (1,0)}] {};
               \node[box1={(2,0) (3,1)}] {};}
    \def\onebd{\node[box1={(0,0) (1,0)}] {};
               \node[box1={(2,1) (3,1)}] {};}
    \def\zeroc{\node[box0={(0,0) (3,0)}] {};}
    \def\oneca{\node[box1={(0,0) (1,0)}] {};
               \node[box1={(2,0) (3,0)}] {};}
    \matrix(boxes)[row sep=10pt, column sep=10pt] {
      \yaxis2 \zeroa
      \lbounds(0,0;1,1){0};
      \lbounds(2,1;3,2){1};
      &
      \zeroa\oneaa
      \sbounds(0;0,0){00};
      \sbounds(1;1,1){10};
      \sbounds(2;1,1){01};
      \sbounds(3;2,2){11};
      \neighbourhoods
      &
      \zeroa\oneaa
      \value(0,0){00};
      \value(1,1){10};
      \value(2,1){01};
      \value(3,2){11};
      &
      \node[caption] at (0, 1) {Rule 240 ($\hat\sigma^1$)};
      \\
      \yaxis1 \zerob
      \lbounds(0,0;1,0){0};
      \lbounds(2,1;3,1){1};
      &
      \zerob\oneba
      \sbounds(0;0,0){00};
      \sbounds(1;0,0){10};
      \sbounds(2;1,1){01};
      \sbounds(3;1,1){11};
      \neighbourhoods
      &
      \zerob\oneba
      \value(0,0){00};
      \value(1,0){10};
      \value(2,1){01};
      \value(3,1){11};
      &
      \node[caption] at (0, .5) {(Rule 204)};
      \\
      &
      \zerob\onebb
      \sbounds(0;0,0){00};
      \sbounds(1;1,0){10};
      \sbounds(2;1,1){01};
      \sbounds(3;2,1){11};
      \neighbourhoods
      &
      \zerob\onebb
      \value(0,0){00};
      \value(1,1){10};
      \value(2,1){01};
      \value(3,1){11};
      &
      \node[caption] at (0, .5) {Rule 184};
      \\
      &
      &
      \zerob\onebb
      \value(0,0){00};
      \value(1,0){10};
      \value(2,1){01};
      \value(3,1){11};
      \neighbourhoods
      &
      \node[caption] at (0, .5) {(Rule 204)};
      \\
      &
      \zerob\onebc
      \sbounds(0;0,-1){00};
      \sbounds(1;0,0){10};
      \sbounds(2;1,0){01};
      \sbounds(3;1,1){11};
      \neighbourhoods
      &
      \zerob\onebc
      \value(0,0){00};
      \value(1,0){10};
      \value(2,1){01};
      \value(3,1){11};
      &
      \node[caption] at (0, .5) {(Rule 204)};
      \\
      &
      &
      \zerob\onebc
      \value(0,0){00};
      \value(1,0){10};
      \value(2,0){01};
      \value(3,1){11};
      &
      \node[caption] at (0, .5) {Rule 226};
      \\
      &
      \zerob\onebd
      \sbounds(0;0,-1){00};
      \sbounds(1;1,0){10};
      \sbounds(2;1,0){01};
      \sbounds(3;2,1){11};
      \neighbourhoods
      &
      \zerob\onebd
      \value(0,0){00};
      \value(1,0){10};
      \value(2,1){01};
      \value(3,1){11};
      &
      \node[caption] at (0, .5) {Rule 204 ($\hat\sigma^0$)};
      \\
      \yaxis0 \zeroc
      \lbounds(0,0;1,-1){0};
      \lbounds(2,1;3,0){1};
      &
      \zeroc\oneca
      \sbounds(0;0,-1){00};
      \sbounds(1;0,-1){10};
      \sbounds(2;1,0){01};
      \sbounds(3;1,0){11};
      \neighbourhoods
      &
      \zeroc\oneca
      \value(0,0){00};
      \value(1,0){10};
      \value(2,0){01};
      \value(3,0){11};
      &
      \node[caption] at (0, 0) {Rule 170 ($\hat\sigma^{-1}$)};
      \\
    };
  \end{tikzpicture}
  \caption{Construction of all number-conserving ECA rules.}
  \label{fig:elementary-conserving-automata}
\end{figure}
This is done in Figure~\ref{fig:elementary-conserving-automata}. The
figure must be read from left to right and from top to bottom. The
first column contains all possible box diagrams for the first step of
the construction. To the right of each diagram and right-below of it
are its refinements, the diagrams for all the possible next steps of
the construction. So the second box diagram in the first column has
four refinements, and the second of them, two.

The rightmost column contains the code numbers of the ECA that belong
to the flows in column 3. The code numbers were given by Wolfram
\cite{Wolfram1983}. If a rule is a shift rule $\hat\sigma^k$, the name
of the rule is also given. Since we have here switched back to a
symmetrical cellular neighbourhood, the shift speed is different from
that what we would have seen in the context of
Figure~\ref{fig:sigma-3}.

Some rule names are put in braces. They belong to rules that occur
more than once in the diagram, constructed in different ways. For an
unknown reason it is only Rule 204, the identity function, that occurs
more than once. But there is only one construction of Rule 204 that
uses bound half-flows; it is that one in which the rule name is not
put in braces.

One new phenomenon appears here: It can happen that the ``upper'' grey
line is below the ``lower'' grey line. This occurs the first time in
the construction of Rule 184. It is the reason why the two types of
grey lines have distinct shapes.

The result of this calculation is that the only number-conserving ECA
rules are the shift rules $\hat\sigma^{-1}$, $\hat\sigma^0$,
$\hat\sigma^1$, the so-called ``traffic rule'' 184, and Rule 226,
which is Rule 184 with left and right exchanged.

\section{Minimal flows}
\label{sec:minimal-flows}

We have now found all flow functions with rule width $\ell = 2$ and
for a minimal state set $\Sigma$ of capacity $C = 1$. Similar
computations for larger $\ell$ or $\Sigma$ would soon become unwieldy.
I will therefore now describe another method to construct all flows
for a given $\ell$ and $\Sigma$. It is based on a lattice structure on
the set of flows.

The lattice structure also provides formulas that can serve as names
for all the flow functions.

\subsection{Sets of flows}

To keep the following arguments short (or at least not too long), we
will introduce some notation. We will write $\F(\ell, \Sigma)$ for the
\emph{set of flows} of width $\ell$ and state set $\Sigma$, i.\,e.\
the set of functions
$f \colon \Sigma^\ell \to \{ 0, \dots, \ell C \}$, with
$C = \max \set{ \#\alpha \colon \alpha \in \Sigma}$ that satisfy the
flow conditions~\eqref{eq:forward-flow-conditions}. (The restriction
to $r_1 = \ell$ and $r_2 = 0$ is again in place.)

We will also write $\H(\ell, \Sigma)$ for the \emph{set of free
  half-flows} related to $\ell$ and $\Sigma$. It is the set of
function families $(\fmin_k, \fmax_k)_{0 \leq k \leq \ell}$ that
satisfy the inequalities~\eqref{eq:free-halfflows}.

If $\Sigma$ is a minimal state set of capacity $C$, we will also write
these sets as $\F(\ell, C)$ and $\H(\ell, C)$. When the intended
meaning is clear, we may also write $\F$ or $\H$.

\subsection{Flows as a partially ordered set}

Our next step is the introduction of a partial order on $\F$ and $\H$.
We define the order pointwise: For two flows $f$, $g \in \F$, the
relation $f \leq g$ shall be true if
\begin{equation}
  \label{eq:flow-less}
  \forall v \in \Sigma^\ell \colon f(v) \leq g(v)\,.
\end{equation}
For systems of half-flows $f$, $g \in \H$, conditions similar
to~\eqref{eq:flow-less} must be satisfied for all of their half-flows
to make $f \leq g$ true.

\medskip
The \emph{minimum} $f \meet g$ and \emph{maximum} $f \join g$ of two
flow functions $f$, $g \in \F$ is defined pointwise too, in the form
\begin{subequations}
  \label{eq:meet-join-def}
  \begin{align}
    \label{eq:meet-def}
    \forall v \in \Sigma^\ell \colon
    (f \meet g)(v) &= \min\{f(v), g(v)\}, \\
    \label{eq:join-def}
    \forall v \in \Sigma^\ell \colon
    (f \join g)(v) &= \max\{f(v), g(v)\}\,.
  \end{align}
\end{subequations}
For systems of half-flows, $f \meet g$ and $f \meet h$ are defined in
the same way -- here the formulas of~\eqref{eq:meet-join-def} are
applied to all pairs of half-flows in $f$ and $g$ with the same type.

\medskip
The minimum and maximum of a set $S$ of flows (or half-flow systems)
is written $\Meet S$ and $\Join S$. For $S \subseteq \F$ we have then
\begin{subequations}
  \label{eq:set-meet-join-def}
  \begin{align}
    \label{eq:set-meet-def}
    \forall v \in \Sigma^\ell \colon
    (\Meet S)(v) &= \min\set{ f(v) \colon f \in S }, \\
    \label{eq:set-join-def}
    \forall v \in \Sigma^\ell \colon
    (\Join S)(v) &= \max\set{ f(v) \colon f \in S },
  \end{align}
\end{subequations}
and again the formulas for half-flows are similar.

\medskip
The following theorem shows that these definitions are useful.
\begin{theorem}
  \label{thm:lattice}
  With the operations $\meet$ and $\join$ as defined above, the sets
  $\F(\ell, \Sigma)$ and $\H(\ell, \Sigma)$ each form a distributive
  lattice.
\end{theorem}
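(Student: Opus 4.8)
The plan is to exhibit both $\F$ and $\H$ as \emph{sublattices} of a large but completely transparent ambient lattice, so that the lattice axioms and distributivity come for free and only one genuine fact remains to prove. Regard a flow $f \in \F(\ell,\Sigma)$ as a point of $\Z^{\Sigma^\ell}$, with one coordinate $f(v)$ for each neighbourhood $v$, and regard a half-flow system $f \in \H(\ell,\Sigma)$ as a point of $\Z^N$, where $N$ indexes all the values $\fmin_k(v)$ and $\fmax_k(v)$ together. The partial order \eqref{eq:flow-less} is then exactly the restriction of the coordinatewise (product) order, and the operations $\meet$ and $\join$ are the restrictions of coordinatewise $\min$ and $\max$, which are the meet and join of that ambient order. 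Since $\Z$ is a chain, and any product of chains is a distributive lattice, $\Z^{\Sigma^\ell}$ and $\Z^N$ are distributive lattices. Distributivity and the lattice identities are preserved under passage to any subset that is closed under the two operations, so the \emph{only} thing that has to be checked is that $\F$ and $\H$ are closed under $\meet$ and $\join$, i.e.\ that the pointwise minimum and maximum of two admissible functions is again admissible.

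Closure rests on the observation that every defining inequality of $\F$ or $\H$ is of one of two elementary shapes: a one-sided \emph{bound} on a single value (a constraint $a \leq x \leq b$ with constants $a,b$) or a \emph{difference constraint} relating two values (a constraint $x - y \leq c$ with a constant $c$). For $\F$ this is immediate: writing $a = w_{0:\ell}$ and $b = w_{1:\ell}$, the flow condition~\eqref{eq:forward-flow-conditions} is exactly the pair $f(b) - f(a) \leq \# w_\ell$ and $f(a) - f(b) \leq \#^c w_\ell$. For $\H$ one reads the same structure off the hypotheses: the inequalities \eqref{eq:lower-raising-recursion} and \eqref{eq:upper-raising-recursion} of Lemma~\ref{thm:half-flows-recursion} are difference constraints between neighbouring half-flows, the inequalities \eqref{eq:fmin-bounds} and \eqref{eq:fmax-bounds} of Lemma~\ref{thm:half-flow-bounds} are single-value bounds, and \eqref{eq:fmin-fmax-bounds} is a difference constraint between $\fmin_k(v)$ and $\fmax_k(v)$.

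The key lemma, and the main (essentially the only) real obstacle, is then purely combinatorial: the solution set in $\Z^N$ of an arbitrary system of bounds and difference constraints is closed under coordinatewise $\min$ and $\max$. Bounds are trivially preserved, since $\min$ and $\max$ of two numbers in $[a,b]$ stay in $[a,b]$. For a difference constraint $x_i - x_j \leq c$ satisfied by two admissible points $x$ and $y$, I would establish closure under $\min$ by a short case split: setting $z = \min(x,y)$ coordinatewise and assuming without loss of generality that $z_i = x_i$, one gets $z_i - z_j = x_i - x_j \leq c$ when $z_j = x_j$, and $z_i - z_j = x_i - y_j \leq y_i - y_j \leq c$ when $z_j = y_j$ (using $x_i \leq y_i$); the argument for $\max$ is symmetric. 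Applying this to the two constraints per $w$ in the case of $\F$, and to all the constraints listed above in the case of $\H$, shows that $f \meet g$ and $f \join g$ again satisfy every defining inequality, hence lie again in $\F$, respectively $\H$.

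Finally I would assemble the pieces. Closure makes $\F$ and $\H$ sublattices of the distributive lattices $\Z^{\Sigma^\ell}$ and $\Z^N$, and a sublattice of a distributive lattice is again a distributive lattice; this gives the theorem. I expect all the weight of the argument to sit in the single difference-constraint lemma of the previous paragraph, everything else being bookkeeping and inheritance from the ambient product lattice.
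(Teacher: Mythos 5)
Your proof is correct and takes essentially the same route as the paper: both embed $\F$ and $\H$ in a product of chains, reduce distributivity to closure under pointwise $\min$ and $\max$, and verify closure by the same case split on which of the two functions attains the minimum at each coordinate of a difference constraint. The only difference is organisational --- you isolate the difference-constraint argument as a single abstract lemma applied uniformly to $\F$ and $\H$, whereas the paper works out the $\F$ case concretely and then observes that the inequalities defining $\H$ have the same three shapes.
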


\begin{proof}
  We will prove the theorem for $\F$ and then sketch a similar argument
  for $\H$.

  First we note that the set $\{ 0, \dots, C \}^{\Sigma^\ell}$ of
  functions from $\Sigma^\ell$ to $\{ 0, \dots, C \}$ is a
  distributive lattice. This is because $\{ 0, \dots, C \}$, as a
  linear order, is distributive, and
  $\{ 0, \dots, C \}^{\Sigma^\ell}$, as a product of distributive
  lattices, is so too \cite[Proposition 4.8]{Davey2002}.

  It is therefore enough to show that $\F$ is closed under minimum and
  maximum; then it is a sublattice of the full function space and also
  distributive \cite[Section 4.7]{Davey2002}.

\medskip
  To do this, let $f$, $g \in \F$ be two flows and $h = f \meet g$. We
  have then to verify that the following inequalities are true for all
  $v \in \Sigma^\ell$ and $w \in \Sigma^{\ell + 1}$:
  \begin{subequations}
    \label{eq:meet-conditions}
    \begin{align}
      \label{eq:meet-greater}
      h(v) &\geq 0, \\
      \label{eq:meet-less}
      h(v) &\leq \# v, \\
      \label{eq:meet-notempty}
      h(w_{1:\ell}) - h(w_{0:\ell}) &\leq \# w_\ell, \\
      \label{eq:meet-notfull}
      h(w_{0:\ell}) - h(w_{1:\ell}) &\leq \#^c w_\ell\,.
    \end{align}
  \end{subequations}
  The first two inequalities control the upper and lower bounds on $h$
  and are clearly satisfied. The other two represent the flow
  conditions~\eqref{eq:forward-flow-conditions} for $h$. We first look
  at~\eqref{eq:meet-notempty}. The two flow values at its left side
  each stand either for a value of $f$ or of $g$. If both belong to
  the same flow, say $f$, then~\eqref{eq:meet-notempty} is actually
  the same inequality for that flow and therefore true. The only
  interesting case is therefore that in which the two $h$ values stem
  from different flows. Without loss of generality we may assume that
  the first one is from $f$ and the second from $g$, so that we need
  to find an upper bound for $f(w_{1:\ell}) - g(w_{0:\ell})$. But
  since $h(w_{1:\ell}) = f(w_{1:\ell})$, we must have
  $f(w_{1:\ell}) \leq g(w_{1:\ell})$ and can calculate
  \begin{equation}
    \label{eq:meet-nonempty-true}
    f(w_{1:\ell}) - g(w_{0:\ell}) \leq
    g(w_{1:\ell}) - g(w_{0:\ell}) \leq \# w_\ell,
  \end{equation}
  so~\eqref{eq:meet-notempty} must be true.
  Inequality~\eqref{eq:meet-notfull} has the same form and can be
  proved in the same way. This proves that $f \meet g \in \F$.

\medskip
  When we set instead $h = f \join g$, the proof
  of~\eqref{eq:meet-notempty} is a bit different. Now we conclude
  instead from $h(w_{0:\ell}) = g(w_{0:\ell})$ that
  $g(w_{0:\ell}) \geq f(w_{0:\ell})$,
  and~\eqref{eq:meet-nonempty-true} becomes
  \begin{equation}
    \label{eq:join-nonempty-true}
    f(w_{1:\ell}) - g(w_{0:\ell}) \leq
    f(w_{1:\ell}) - f(w_{0:\ell}) \leq \# w_\ell\,.
  \end{equation}
  The rest of the argument is the same, and we have now proved that
  $\F$ is a distributive lattice.

\medskip
  The space $\H$ of free half-flows is a subset not of a single
  function space but the product of several, one for each half-flow.
  This product is still a distributive lattice. Therefore it again is
  enough to check whether it is closed under $\meet$ and $\join$.
  There are a lot more inequalities to consider, but they all can be
  brought to one of the three forms
  \begin{subequations}
    \label{eq:general-form}
    \begin{align}
      \label{eq:general-less}
      h_k(w) &\leq K, \\
      \label{eq:general-greater}
      h_k(w) &\geq K, \\
      \label{eq:general-maxdist}
      h_j(v) - h_k(w) &\leq K\,.
    \end{align}
  \end{subequations}
  In these inequalities, the $h$ terms stand either for $\hmin$ or
  $\hmax$, with possibly different choices in the same
  inequality,\footnote{Different choices are needed because the
    sequence $(\o h_k, \u h_k)_k$ must also satisfy the right
    inequality of~\eqref{eq:fmin-fmax-bounds2}, which here becomes
    $\o h_k(v) \leq \u h_k(v) + (\ell - k) C$. To bring it to the
    form~\eqref{eq:general-maxdist}, it must be written as
    $\o h_k(v) - \u h_k(v) \leq (\ell - k) C$.} $K$ is a constant that
  does not depend on the half-flow functions, and $v \in \Sigma^j$ and
  $w \in \Sigma^k$. These inequalities have the same form as those for
  $\F$, therefore the same arguments can be used, and $\H$ too is a
  distributive lattice.
\end{proof}

\subsection{Minimal flows as building blocks for all the flows}

Now, with the lattice structures of $\F$ and $\H$, we can use a subset
of all flows as building blocks for the rest. These are the
\emph{minimal flows}
\begin{equation}
  \label{eq:minimal-flow}
  m(v, k) = \Meet \set{ f \in \F \colon f(v) \geq k }
\end{equation}
with $v \in \Sigma^\ell$ and $k \in \{0, \dots, \# v \}$. So $m(v, k)$
is the smallest flow that at the neighbourhood $v$ has at least the
strength $k$.

\medskip
Every flow can then be represented as a maximum of minimal flows,
\begin{equation}
  \label{eq:representation}
  f = \Join \set{ m(v, f(v)) \colon v \in \Sigma^\ell }\,.
\end{equation}
To see why this is so, we note that $m(v, f(v))$ is a flow that is
less than or equal than $f$ and agrees with $f$ when evaluated at $v$.
This is true because for $m(v, f(v))$, the right side
of~\eqref{eq:minimal-flow} becomes
$\Meet \set{ g \in \F \colon g(v) \geq f(v) }$. This expression is the
minimum of a set of flows which contains $f$, therefore
$m(v, f(v)) \leq f$. And it is the minimum of a set of functions whose
values at $v$ are greater or equal to $f(v)$ and which contains~$f$,
therefore $m(v, f(v))(v) = f(v)$.

The first fact proves that the right side of~\eqref{eq:representation}
is less than or equal to $f$; the second fact, that for each
neighbourhood $v$, the right side of~\eqref{eq:representation} has a
value that is not less than $f(v)$. So the right side
of~\eqref{eq:representation} must be $f$.

Apart from being building blocks, the minimal flows are also
interesting in their own right. They answer the questions: If I
require that $f(v) = k$, which influence has this on other
neighbourhoods? Does pushing the particles forward in one place set
particles in other place in motion? We will therefore construct the
minimal flows.

\begin{theorem}
  \label{thm:minimal-flows}
  Let $f = m(a, k)$ be a minimal flow in $\F(\ell, \Sigma)$ and
  $b \in \Sigma^\ell$. Then $f(b)$ is the smallest non-negative number
  which satisfies the inequality
  \begin{equation}
    \label{eq:minimal-explicit}
    f(b) \geq k - \min\{ \abs{u}, \abs{u'} \} C - \# w - \#^c w',
  \end{equation}
  for all $u$, $v$, $w$, $u'$, $w' \in \Sigma^*$ with $a = u v w$ and
  $u' v w' = b$.
\end{theorem}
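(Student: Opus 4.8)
The plan is to identify the claimed value with $m(a,k)(b)$ by a two-sided argument. Let $g$ be the function whose value $g(b)$ is the smallest non-negative integer satisfying \eqref{eq:minimal-explicit} for \emph{all} admissible decompositions, and set
\[ D(b) = \min\set{ \min\{\abs u,\abs{u'}\}\,C + \#w + \#^c w' : a = uvw,\ b = u'vw' }, \]
so that the defining property of $g$ reads $g(b) = \max\{0,\, k - D(b)\}$. Since $m(a,k) = \Meet\set{ f \in \F : f(a) \geq k }$ by \eqref{eq:minimal-flow}, it suffices to prove: (i) every flow $f$ with $f(a) \geq k$ satisfies \eqref{eq:minimal-explicit}, whence $g \leq f$ for all such $f$ and therefore $g \leq m(a,k)$; and (ii) $g$ is itself a flow with $g(a) \geq k$, whence $g$ belongs to the set whose meet defines $m(a,k)$, so $m(a,k) \leq g$. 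The two inclusions give $m(a,k) = g$.

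For part (i), since $f(a) \geq k$ and $f(b) \geq 0$ always hold, it is enough to prove the sharper inequality $f(a) - f(b) \leq \min\{\abs u,\abs{u'}\}C + \#w + \#^c w'$ for each decomposition. I would prove this by telescoping along a chain of length-$\ell$ neighbourhoods $b = s_0, s_1, \dots, s_N = a$ in which each $s_{t+1}$ arises from $s_t$ by a one-cell shift; consecutive members then overlap in $\ell-1$ cells and are linked by a flow condition \eqref{eq:forward-flow-conditions}, read as an increment bound. A right shift contributes at most $\#e$ to $f(s_{t+1})-f(s_t)$, where $e$ is the cell entering at the right, and a left shift at most $\#^c e$, where $e$ is the cell leaving at the right; cells entering or leaving at the left are free. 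The chain is chosen to keep $v$ inside the window at every step, so that $v$ is never charged: first peel $w'$ off the right by left shifts (cost $\#^c w'$), then build $w$ by right shifts (cost $\#w$), and finally cycle the $\min\{\abs u,\abs{u'}\}$ cells sitting next to $v$ out through the right boundary and back, using empty filler cells (cost $\min\{\abs u,\abs{u'}\}C$). A short case split according to whether $\abs u \leq \abs{u'}$ or $\abs u > \abs{u'}$ is needed to order the phases so that the window returns to $a$ exactly, but the total cost is the claimed bound in both cases.

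For part (ii) I must verify that $g$ lies in $\F$ and that $g(a) \geq k$. The codomain condition is immediate, as $0 \leq g(b) \leq k \leq \ell C$, and $g(b)\le\#b$ follows once $g\in\F$ from Lemma~\ref{thm:half-flow-bounds}. For the two flow conditions it suffices, since $\max\{0,\cdot\}$ is monotone and $1$-Lipschitz and the offsets are non-negative, to show that $D$ satisfies $D(b) - \#e \leq D(b') \leq D(b) + \#^c e$ whenever $b = c_{0:\ell}$ and $b' = c_{1:\ell}$ are the two windows of a word $c \in \Sigma^{\ell+1}$ with new cell $e = c_\ell$. Each inequality is obtained by converting an optimal decomposition of one window into a decomposition of the other: for $D(b') \leq D(b) + \#^c e$, drop the first cell of $u'$ and append $e$ to $w'$ (using $\#^c(w'e)=\#^c w'+\#^c e$); for $D(b) \leq D(b') + \#e$, move the trailing cell $e$ off $w''$ and prepend the dropped cell to $u'$ (using $C-\#^c e=\#e$), with $\min\{\abs u,\abs{u'}\pm1\}\le\min\{\abs u,\abs{u'}\}+1$. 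Degenerate cases where $v$ abuts the affected end (so $u'$ or $w''$ is empty) are handled by first shortening $v$ by one cell. Finally the trivial decomposition $v = a$ gives $D(a) = 0$, hence $g(a) = k$, completing (ii).

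The main obstacle I anticipate is securing the exact coefficient $\min\{\abs u,\abs{u'}\}C$ in part (i). The naive telescoping that slides the boundary all the way to one side of $v$ loses the sharp constant and yields $(\ell-\abs v)C$ instead; the point is to keep $v$ permanently inside the sliding window—so the $\abs vC$ one would pay to move it across the boundary is never incurred—while still reshuffling exactly the $\min\{\abs u,\abs{u'}\}$ cells adjacent to $v$, which is what forces the explicit two-case chain and the use of empty filler cells at the right boundary. A secondary, more routine subtlety is the boundary bookkeeping in part (ii) when $v$ touches an end of $b$ or $b'$.
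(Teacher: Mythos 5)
Your strategy is sound and completable, but it is genuinely different from the paper's main argument. The paper works in the half-flow lattice: it forms the minimal half-flow system $M(a,k)$ of \eqref{eq:minimal-half-flow}, encodes the inequalities of Lemma~\ref{thm:half-flows-recursion} and~\ref{thm:half-flow-bounds} as ``arrows'', and reduces every arrow chain from $a$ to $b$ to the two canonical chains \eqref{eq:arrow-bounds} of strengths $-\abs{u'}C-\#w-\#^c w'$ and $-\abs{u}C-\#w-\#^c w'$. Your two-sided plan -- necessity of \eqref{eq:minimal-explicit} for every $f\in\F$ with $f(a)\geq k$, then verification that $g=\max\{0,k-D\}$ is itself such a flow -- is precisely the alternative the author sketches in the footnote at the end of his proof. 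Your telescoping chains of length-$\ell$ windows are the arrow chains read at the level of full flows, and your two phase orderings (cycle--peel--build versus peel--build--cycle) reproduce the two canonical chains, yielding the costs $\abs{u'}C+\#w+\#^c w'$ and $\abs{u}C+\#w+\#^c w'$ respectively. What your route buys is that the half-flow machinery and the arrow-reduction case analysis disappear; what it costs is that the formula must be guessed in advance and that part (ii) needs its own bookkeeping. Two details to fix in the write-up: in part (i) it is the \emph{filler} cells entering and leaving at the right boundary that are cycled at cost $C$ apiece (the cells of $u'$, which sit left of $v$, leave for free through the left end and never cross the right boundary); and in the degenerate case of part (ii) with $w''=\epsilon$ you need the length identity $\abs{u''}+\abs{w''}=\abs{u}+\abs{w}$, which forces $\abs{u}\leq\abs{u''}$ and keeps the minimum in $\min\{\abs{u},\abs{u''}+1\}$ from growing.
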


\begin{proof}
  According to Theorem~\ref{thm:flow-construction}, $m(a, k)$ can be
  constructed from free half-flows. We take now the minimum of all
  half-flow sequences that lead to $m(a, k)$, i.\,e.\ the system
  \begin{equation}
    \label{eq:minimal-half-flow}
    M(a, k) = \Meet \set{
      (\fmin_k, \fmax_k)_k \in \H \colon
      \fmin_\ell(a) = \fmax_\ell(a) \geq k }\,.
  \end{equation}
  It consists of a sequence of half-flows that leads to $m(a, k)$
  ``greedily'': $M(a, k)$ is a sequence
  $(\fmin_i, \fmax_i)_{0 \leq i \leq \ell}$ of half-flow pairs in
  which every value $\fmin(v)$ and $\fmax(v)$ is the smallest possible
  for which the construction sequence can still end in $m(a, k)$. We
  will now construct such a sequence.

\medskip
  To do this, we will use the inequalities
  of~\eqref{eq:free-halfflows}, but in a much more compressed form. In
  a first simplification, we write them as
  \begin{subequations}
    \label{eq:short-construction}
    \begin{align}
      \label{eq:short-fmin}
      \fmin(w) &\leq \fmin(v w) \leq \fmin(v) + \#^c w, \\
      \label{eq:short-fmax}
      \fmax(v) - \#^c w &\leq \fmax(v w) \leq \fmax(w), \\
      \label{eq:short-bounds}
      \fmin(v) &\leq \fmax(v) \leq \fmin(v) + (\ell - \abs{v}) C,
    \end{align}
  \end{subequations}
  and they are valid for all $v$, $w \in \Sigma^*$ with
  $\abs{v w} \leq \ell$.

\medskip
  These inequalities differ from their representation
  in~\eqref{eq:free-halfflows} in two aspects: (a) The indices on
  $\fmin$ and $\fmax$ are dropped, since they can be derived from
  their arguments, and writing them would make the formulas only more
  complicated. (b) In the first two lines, the formulas have been
  \emph{iterated} and the variables \emph{renamed}. From the
  inequality $\fmin_{k+1}(w) \leq \fmin_k(w_{0:k}) + \# w_k$
  in~\eqref{eq:lower-raising-recursion}, valid for
  $w \in \Sigma^{k+1}$, we get by induction
  $\fmin_{k+n}(w) \leq \fmin_k(w_{0:k}) + \# w_{k:n}$ for
  $w \in \Sigma^{k+n}$, and then, after renaming $w_{0:k}$ and
  $w_{k:n}$ to $v$ and $w$, the right side of~\eqref{eq:short-fmin}.
  The other derivations are similar.

  You may also have noticed that conditions~\eqref{eq:fmin-bounds2}
  and \eqref{eq:fmax-bounds2} have disappeared. They are less
  important. Their left parts state that all half-flows must be
  non-negative, which we will keep in mind, while their right parts
  contain upper bounds to the half-flows. We will not need them
  because in our construction, all half-flows are as small as
  possible.

\medskip
  In a second compression step, we now express the
  inequalities~\eqref{eq:short-construction} with arrows. We write an
  inequality
  \begin{equation}
    \label{eq:bounds-inequality}
    \fmin(u) + n \leq \fmax(v),
  \end{equation}
  as an arrow
  \begin{equation}
    \label{eq:bounds-arrow}
    \u u \arrow{n} \o v,
  \end{equation}
  and do the same for all other combinations of upper and lower bars.
  The number $n$ on an arrow is called its \emph{strength}, and it is
  omitted when $n = 0$.

\medskip
  An arrow as in~\eqref{eq:bounds-arrow} can be interpreted as ``if
  $\fmin(u) \geq x$, then $\fmax(v) \geq x + n$''. So when we have a
  chain of arrows, like $\o u \arrow{m} \o v \arrow{n} \u w$ we can
  add their strengths and get a new arrow, in this case
  $\o u \arrow{m + n} \u w$. We want to find arrows of the form
  $\u a \arrow{n} \u b$, because they lead to lower bounds on $f(b)$.
  (Note that, since $\abs{a} = \abs{b} = \ell$, we have
  $\fmin(a) = \fmax(a)$ and $\fmin(b) = \fmax(b)$, so that upper and
  lower bars on $a$ and $b$ are interchangeable.)

\medskip
  When we now translate the inequalities
  of~\eqref{eq:short-construction} into arrows, they become
  \begin{subequations}
    \label{eq:arrows}
    \begin{align}
      \label{eq:lower-arrows}
      \u w &\arrow{} \u{v w}, &
      \u{v w} &\arrow{-\# w} \u v, \\
      \label{eq:upper-arrows}
      \o v & \arrow{-\#^c w} \o{v w}, &
      \o{v w} &\arrow{} \o w, \\
      \label{eq:updown-arrows}
      \u v & \arrow{} \o v, &
      \o v & \arrow{(\abs{v} - \ell) C} \u v \,.
    \end{align}
  \end{subequations}

  In principle, we must consider all arrow chains from $a$ to $b$. But
  most of them can be replaced by stronger chains, and only two
  remain. The following arguments show how this is done.
  \begin{enumerate}
  \item \emph{We can ignore all chains in which two arrows of the same
      form occur in sequence.}

    An example for such a chain is
    $\u{v w w'} \arrow{-\# w'} \u{v w} \arrow{-\# w} \u v$, in which
    the left arrow of~\eqref{eq:lower-arrows} occurs twice. It can be
    replaced with the equally strong arrow
    $\u{v w w'} \arrow{-\# w w'} \u v$. The same can be done with the
    other arrows in~\eqref{eq:lower-arrows}
    and~\eqref{eq:upper-arrows}, which are the only arrows to which
    this rule applies.

  \item In the next reduction step, we consider arrow chains that
    consists of arrows of the same ``type''. Two arrows have the same
    type if they either lead from an upper half-flow to an upper
    half-flow or from a lower to an lower half-flow. \emph{Any chain
      of same-type arrows can be brought to one of the forms
      \begin{subequations}
        \label{eq:same-type}
        \begin{gather}
          \label{eq:same-type-lower}
          \u{v w} \arrow{-\#w} \u v \arrow{} \u{u v}, \\
          \label{eq:same-type-higher}
          \o{u v} \arrow{} \o v \arrow{-\#^c w} \o{vw}
        \end{gather}
      \end{subequations}
      without loss of strength.}
\eject

    In other words, we can assume that a shortening arrow always
    occurs before a lengthening one.

    To prove this, we first show that if a lengthening arrow is
    followed by a shortening arrow, they can be rearranged without
    loss of strength. For lower half-flows, such a chain must have the
    form (a) $\u u \arrow{} \u{u x v} \arrow{-\# x v} \u v$, or (b)
    $\u{u v} \arrow{} \u{u v w} \arrow{-\# w} \u{v w}$. Form (a)
    occurs when the first arrow adds cell states that the second takes
    away, while (b) occurs when there is a common substring $v$ in the
    first and third half-flow of the chain.

    But in (a), we can remove $x$ and get a stronger arrow chain. This
    chain is the special case of (b) with $v = \epsilon$, so that we
    only need to consider (b). And (b) can be replaced
    with~\eqref{eq:same-type-lower}, so that we have proved our
    assertion for chains of two arrows.

    Chains of more than two arrows can now be rearranged so that there
    is first a chain of shortening and then one of lengthening arrows.
    But arrows of the same form can be condensed to a single arrow, as
    we have seen before. So we end up again
    with~\eqref{eq:same-type-lower}.

    Chains of less than two arrows can be extended by adding ``empty''
    arrows, say by setting $u = \epsilon$. Therefore all sequences of
    arrows between lower half-flows can be brought into the
    form~\eqref{eq:same-type-lower}. The proof for upper half-flows is
    similar.

  \item Another simplification concerns the \emph{type-changing}
    arrows in~\eqref{eq:updown-arrows}. We can assume that \emph{two
      type-changing arrows never occur directly in sequence}. For if
    they occur, as in
    $\u v \arrow{} \o v \arrow{(\abs{v} -\ell) C} \u v$, we can remove
    them both and get an arrow chain that is at least as strong.

  \item Next we consider a chain of three arrows with a type-changing
    arrow in its centre. As we now can conclude, they can only have
    the following two forms (with $u v = u' v'$ or $v u = v' u'$,
    respectively):
    \begin{subequations}
      \label{eq:three-chain}
      \begin{gather}
        \label{eq:three-chain-up}
        \u v \arrow{} \u{u v} \arrow{}
        \o{u' v'} \arrow{} \o{v'}, \\
        \label{eq:three-chain-down}
        \o v \arrow{-\#^c u} \o{v u} \arrow{(\abs{v u} - \ell)C}
        \u{v' u'} \arrow{-\# u'} \u{v'}\,.
      \end{gather}
    \end{subequations}
    But for them we can assume that either $u = \epsilon$ or
    $u' = \epsilon$ and simplify the arrow chains accordingly. (In
    other words, \emph{do not take away what you just have added}.)

\medskip
    The proof consists of four cases. For for each arrow chain, one
    must distinguish between $\abs{u} \geq \abs{u'}$ and
    $\abs{u} \leq \abs{u'}$. We will look only at one case, namely
    that in which $\abs{u} \leq \abs{u'}$ is true
    in~\eqref{eq:three-chain-down}. There we can write $u' = x u$ for
    a suitable $x \in \Sigma^*$,
    \begin{equation}
      \label{eq:three-chain-down1}
      \o v \arrow{-\#^c u} \o{v u} \arrow{(\abs{v u} - \ell)C}
      \u{v' x u} \arrow{-\# x u} \u{v'},
    \end{equation}
    and then remove $u$ to get a stronger chain,
    \begin{equation}
      \label{eq:three-chain-down2}
      \o v \arrow{} \o{v} \arrow{(\abs{v} - \ell)C}
      \u{v' x} \arrow{-\# x} \u{v'}\,.
    \end{equation}
    The other cases are similar and always lead to a new chain that is
    at least as strong as the original one.

\medskip
  \item Now we can construct the two chains that lead to
    inequality~\eqref{eq:minimal-explicit} of the theorem:
    \begin{subequations}
      \label{eq:arrow-bounds}
      \begin{gather}
        \label{eq:arrow-bound-lower}
        \u a = \u{u v w}
        \arrow{-\# w} \u{u v}
        \arrow{} \o{u v}
        \arrow{} \o v
        \arrow{-\#^c w'} \o{v w'}
        \arrow{(\abs{v w'} - \ell) C} \u{v w'}
        \arrow{} \u{u' v w'} = \u b, \\
        \label{eq:arrow-bound-upper}
        \o a = \o{u v w}
        \arrow{} \o{v w}
        \arrow{(\abs{v w} - \ell) C} \u{v w}
        \arrow{-\# w} \u v
        \arrow{} \u{u' v}
        \arrow{} \o{u' v}
        \arrow{-\#^c w'} \o{u' v w'} = \o b\,.
      \end{gather}
    \end{subequations}
    We must collect the arrow chains for all possible $u$, $v$, $w$,
    $u'$, $v' \in \Sigma^*$ to get all requirements on $f(b)$ for a
    given $f(a)$.

    The chains arise naturally once we note that $a$ has maximal
    length and that therefore the first arrow must necessarily be
    shortening. After that, there is only one possible successor for
    each arrow, until the arrow chain ends in $b$.

    The first chain has the weight $-\# w - \#^c w' - \abs{u'} C$ and
    the second, $-\# w - \#^c w' - \abs{u} C$, from which we can see
    that~\eqref{eq:minimal-explicit} is the right formula.

  \item Our argument is not yet complete. The arrow chains
    of~\eqref{eq:arrow-bounds} consist of a single cycle of shortening
    and lengthening arrows, from $u v w$ to $u' v' w'$. What if we
    created arrow chains of more than one such cycle? Would we then
    get more conditions on $f(b)$?

\medskip
    To resolve this question, we first need shorter expressions for
    the cycles. We write the chains in~\eqref{eq:arrow-bounds} as
    single arrows, as
    \begin{subequations}
      \label{eq:cycles}
      \begin{gather}
        \label{eq:cycle-lower}
        \u{u v w}
        \arrow{-\abs{u'} C - \#w - \#^c w'} \u{u' v' w'}, \\
        \label{eq:cycle-upper}
        \o{u v w}
        \arrow{-\abs{u} C - \#w - \#^c w'} \o{u' v' w'}\,.
      \end{gather}
    \end{subequations}
    The two cycles have essentially the same form, so it will be
    enough to consider only the first one.

\medskip
    When we now connect two arrows of the form~\eqref{eq:cycle-lower},
    the result can always be written as
    \begin{equation}
      \label{eq:two-cycles}
      \u{u x y z w}
      \arrow{-\abs{u'} C - \#w - \#^c w'}
      \u{u' x y z w'}
      \arrow{-\abs{u''} C - \#z w - \#^c w''}
      \u{u'' y w''}\,.
    \end{equation}
    This is because the action of each arrow can be understood as
    taking away cell states from both sides of the string at its left
    and then adding others, resulting in the string at its right.
    (In~\eqref{eq:cycles}, the regions $u$ and $w$ are removed and
    $u'$ and $w'$ then added.) A region in the centre is left
    unchanged. With two arrows, the unchanged region of the first
    arrow might be shortened by the second. In~\eqref{eq:two-cycles}
    we therefore have assumed, that $x y z$ is the unchanged region of
    the first and $y$ that of the second arrow.

\medskip
    The strength of the arrow chain in~\eqref{eq:two-cycles} is
    $-\abs{u' w' u''} C - \#z w - \#^c w''$. But we can achieve the
    same result with a single arrow,
    \begin{equation}
      \label{eq:two-cycles-single}
      \u{u x y z w}
      \arrow{-\abs{u''} C - \#z w - \#^c w''}
      \u{u'' y w''}\,.
    \end{equation}
    Its strength differs from that of~\eqref{eq:two-cycles} by
    $-\abs{u' w'}C$, which is never a positive number. This means that
    we can replace~\eqref{eq:two-cycles}
    with~\eqref{eq:two-cycles-single} in a chain of arrows and, by
    induction, that a single cycle~\eqref{eq:cycles} is enough.

\medskip
  \item Finally, what about shorter chains? One could omit the
    type-changing arrows and get chains of the form
    \begin{subequations}
      \label{eq:shortchain}
      \begin{gather}
        \label{eq:shortchain-lower}
        \u a = \u{v w} \arrow{-\# w}
        \u v \arrow{} \u{u' v} = \u b, \\
        \label{eq:shortchain-upper}
        \o a = \o{u v} \arrow{}
        \o v \arrow{-\#^c w'} \o{v w'} = \o b,
      \end{gather}
    \end{subequations}
    which look as if they could be stronger that those
    in~\eqref{eq:arrow-bounds}. But in fact they are just special
    cases of these chains. One can e.\,g.\ see
    that~\eqref{eq:shortchain-lower} is
    just~\eqref{eq:arrow-bound-upper} with $u = w' = \epsilon$. Recall
    that $\u a = \o a$ and $\u b = \o b$ and note that, since
    $\abs{v w} = \ell$, the arrow
    $\o{v w} \arrow{(\abs{v w} - \ell) C} \u{vw}$
    in~\eqref{eq:arrow-bound-upper} has strength 0. In the same way
    one can see that~\eqref{eq:shortchain-upper} is a special case
    of~\eqref{eq:arrow-bound-lower}.
  \end{enumerate}
  We have now shown that all relations between $f(b)$ and $f(a)$
  derive from the arrow chains in~\eqref{eq:arrow-bounds}. Therefore
  the conditions in~\eqref{eq:minimal-explicit} define
  $m(a, k)$.\footnote{This proof is a bit long. Another way to prove
    this theorem -- possibly shorter but less natural -- is to derive
    the inequalities~\eqref{eq:minimal-explicit} only as necessary
    conditions and then to verify that the function $f$ defined by
    them satisfies $f(a) \geq k$ and the flow conditions. If the
    conditions~\eqref{eq:minimal-explicit} were not sufficient, $f$
    would be smaller than all the flows $g$ with $g(a) \geq k$ and
    therefore not be a flow.}
\end{proof}

\subsection{Examples}

With Theorem~\ref{thm:minimal-flows}, we can now find examples for
minimal number-conserving automata. As before with the elementary
cellular automata, we use the minimal state set $\Sigma = \{ 0, 1 \}$.
So we have $C = 1$, and every cell may contain at most one particle.
Even with these restrictions, we can find cellular automata with an
interesting behaviour.

\paragraph{The influence of the particle density}

\begin{figure}[!ht]
\vspace*{3mm}
  \center
 \includegraphics{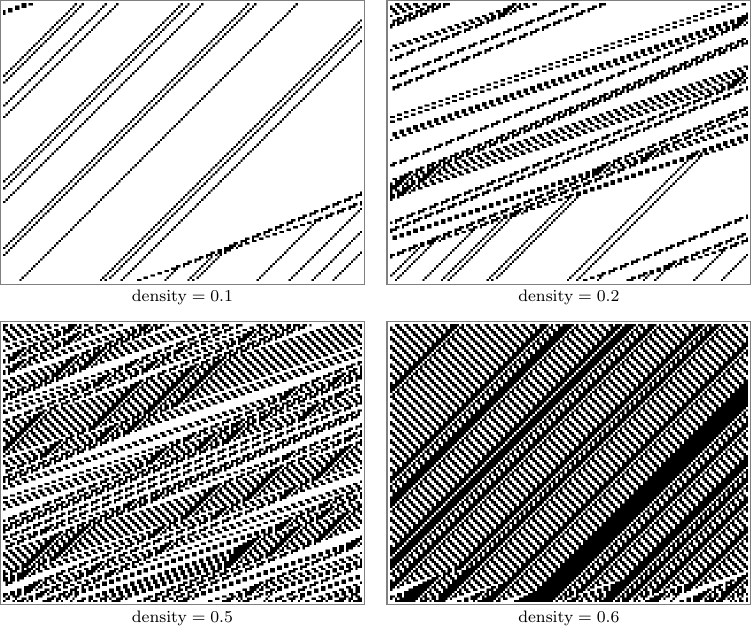}
  \caption{Rule $m(0110, 2)$ at different densities. Time goes
    upward.}
  \label{fig:m0110-2-block}
\end{figure}

An example is $m(0110, 2)$ in Figure~\ref{fig:m0110-2-block}. In the
figure, cells in state 0 are displayed in white, and cells in state 1
in black. Time runs upward, and the line at the bottom is a random
initial configuration.

In an ordinary cellular automaton, one can expect that the number of
ones and zeroes in a random initial configuration has no great
influence on the patterns that arise after a few generation. Under
most rules, the fraction of cells in state 1 changes over time. With
number-conserving automata this is no longer true.
Figure~\ref{fig:m0110-2-block} therefore contains four evolutions of
$m(0110, 2)$ with different densities: the density is the fraction of
cells in state 1 in the initial configuration -- which then stays
constant during the evolution of the cellular automaton.

\medskip
Especially in the two low density evolutions (with density${} = 0.1$
and 0.2), one can see that particles move with three possible speeds:
Isolated particles move with speed 1, blocks of two particles, like
0110, move with speed 3, and blocks of three particles move 5 steps
over two generations (from 0111000000 via 0001101000 to 0000001110)
and therefore have a speed of 2.5. For very low densities like 0.1,
the evolution is initially dominated by non-interacting single
particles, but we can already see how they are collected by the faster
structures and integrated into their particle stream. In the density
0.2 image we can see how the fast structures interact: When a middle
speed particle group interacts with a fast one, a short ``traffic
jam'' of high particle density occurs, but then the two particle
groups separate again. (Note that the middle-speed groups before and
after the interaction consist of different particles.) With density
0.5, the traffic jams are more common, and also highly regular
structures between them. With the moderately high density of 0.6, all
interesting behaviour stops early, and the automaton looks like
$\sigma^1$.

So $m(0110, 2)$ already establishes a vaguely traffic-like behaviour,
except that the speed of a particle also depends on the location of
the particles next to it.

\paragraph{Other phenomena}

\begin{figure}[ht]
  \center
 \includegraphics{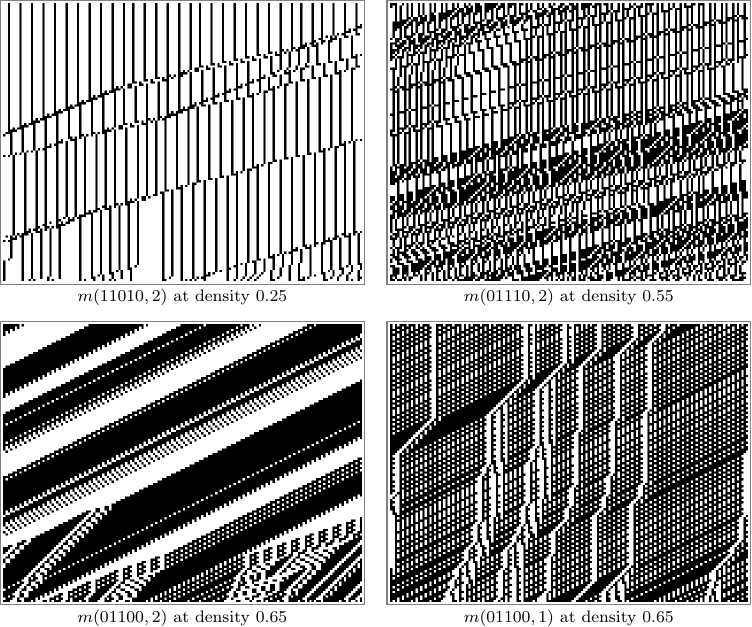}
  \caption{Some minimal rules.}
  \label{fig:examples}\vspace*{-2mm}
\end{figure}

Figure~\ref{fig:examples} illustrates other phenomena that occur with
minimal number-conserving cellular automata.

\medskip
Often, especially at low densities, nothing or almost nothing happens.
The evolution of rule $m(11010, 2)$ at the top left of the figure is
an example. Most of the time, almost all particles are arranged in a
pattern in that no non-zero flow arises. From time to time, a
disturbance moves over the cells.

Rule $m(01110, 2)$, at the top right, consists too of times of
inactivity and moving high-density regions. The high-density regions
have however their own intricate structure.

The evolution of rule $m(01100, 2)$, at the bottom left, has large
regions of the highest possible density, and the disturbances in them
sometimes look like ``anti-particles'': periodic structures of
emptiness between particles.

At the bottom right, the related rule $m(01100, 1)$ shows a
different pattern of almost regular high-density regions. The
disturbances in them show a complex, tree-like pattern.

\section{Odds and ends}

\subsection{Non-deterministic number conservation}

So far, the possibility that two or more states have the same particle
content has rarely been addressed. Such non-minimal state sets have
however an important application: They enable the construction of
non-deterministic number-conserving cellular automata.

\medskip
In a non-deterministic cellular automaton, the value of the transition
function $\phi$ is a set of states, and the next state of a cell may
be any element of the value of $\phi$, when applied to its
neighbourhood. Instead of~\eqref{eq:local-evolution}, we have
\begin{equation}
  \label{eq:local-nondet-evolution}
  \hat{\phi}(a)_x \in \phi(a_{x - r_1}, \dots a_{x + r_2})
  \quad\text{for all $x \in \Z$.}
\end{equation}
When we then have constructed a flow function $f$ for a non-minimal
state set, we can construct a non-deterministic number-conserving rule
with the help of a relaxed form of~\eqref{eq:particles-next-step},
namely
\begin{equation}
  \label{eq:particles-next-step-nondet}
  \phi(w) \subseteq
  \set{ \alpha \in \Sigma \colon
    \# \alpha = f(w_{0:\ell}) + \# w_{r_1} - f(w_{1:\ell})}\,.
\end{equation}

Such a function is clearly number-conserving, since all possible
choices for the next state of a cell have the same particle content.

\medskip
The condition that for each $w \in \Sigma^{\ell + 1}$, all elements of
$\phi(w)$ must have the same particle content is also necessary: If
there are $\alpha$, $\beta \in \phi(w)$ with
$\# \alpha \neq \# \beta$, we can use a configuration $a$ which
contains $w$ as a substring to construct a counterexample. Among the
possible successor configurations of $a$ in the next time step, there
must be two that only differ at one cell, which is in one
configuration in state $\alpha$ and in the other one in state $\beta$.
Since $\# \alpha \neq \# \beta$, number conservation cannot hold for
both configurations.

\subsection{Two-sided neighbourhoods}
\label{sec:two-sided}

We now return to the case where $r_1$ and $r_2$ are arbitrary
non-negative numbers. The flow functions for such \emph{two-sided
  neighbourhoods} are related in a very simple way to the one-sided
neighbourhoods that we have so far investigated.

\medskip
Let $\hat\phi$ be a global transition rule with arbitrary $r_1$ and
$r_2$ and let $\hat\phi'$ be the transition rule with a one-sided
neighbourhood that is related to it. Let $f$ and $f'$ be their flow
functions. Then we can write
\begin{equation}
  \label{eq:neighbourhood-relation}
  \hat \phi = \hat \phi' \circ \hat \sigma^{-r_2}\,.
\end{equation}
This is because we can get the effect of $\hat\phi$ by first moving
the content of all cells by $r_2$ positions to the left and then
applying $\hat \phi'$. The corresponding equation for flows is
\begin{equation}
  \label{eq:flow-relation}
  f(u v) = f'(u v) - \# v,
\end{equation}
for all $u \in \Sigma^{r_1}$ and $v \in \Sigma^{r_2}$: The shift $\hat
\sigma^{-r_2}$ produces an additional flow of $- \# v$ particles over
the boundary between $u$ and $v$.

\medskip
As a corollary of~\eqref{eq:flow-relation}, we see that the new
two-sided flows obey the same partial order as the one-sided do. If
$g$ is another flow with radii $r_1$ and $r_2$ and $f'$ is its
one-sided equivalent, then
\begin{equation}
  \label{eq:order-equivalence}
  f \leq g
  \qquad\text{iff}\qquad
  f' \leq g',
\end{equation}
as we easily can conclude from~\eqref{eq:flow-relation} and the
definition of the partial order in~\eqref{eq:flow-less}.

\medskip
The theory of minimal flows for two-sided neighbourhoods is therefore
isomorphic to that for one-sided neighbourhoods. As a good notation
for two-sided minimal flows I would propose
\begin{equation}
  \label{eq:two-sided-minimal}
  m(u, v; k) = m(u v, k) - \# v\,.
\end{equation}

\section{Open questions}

The set of all number-conserving one-dimensional cellular automata
has, as we have seen, an intricate structure. It leads to many open
questions, of which I will list a few, with comments:
\begin{enumerate}
\item How many number-conserving automata are there for a given state
  set $\Sigma$ and flow length $\ell$?

\item How many flow functions are there for given $\Sigma$ and $\ell$?

  The answers to this and the previous question are the same if
  $\Sigma$ is a minimal state set. For both questions, an explicit
  formula as answer is probably very complex. An asymptotic formula
  could be easier to find and might provide more insight.

  Boccara and Fuk\'s \cite{Boccara1998} have already found that for
  $\Sigma = \{0, 1\}$, there are 5 rules for $\ell = 2$, 22 rules for
  $\ell = 3$ and 428 for $\ell = 4$.

\item If two number-conserving automata have the same flow function,
  how is their behaviour related?

  An answer to this question would provide insight into the relation
  between cellular automata and their flow functions. It would also be
  helpful for the better understanding of non-deterministic
  number-conserving automata.

\item If the behaviour of the automata with flow functions $f$ and $g$
  is known, what can be said about those with flows $f \meet g$ and $f
  \join g$?

  Ideally, the lattice structure of the flows would provide
  information about the cellular automata. This problem should first
  be investigated for minimal state sets, since otherwise it would
  require an answer to the previous question.

\item Given $\fmin_k$ and $\fmax_k$, what can be said about
  $\hat \phi$?

  The pairs $(\fmin_k, \fmax_k)$ provide a classification of
  number-conserving cellular automata, and we should expect that they
  group automata with similar behaviour. But what does ``similar''
  mean in this context?

  An argument similar to that for Figure~\ref{fig:sigma-3} gives a
  partial result: $\fmax_0 = \ell C$ enforces that $\hat \phi$ is the
  shift function $\hat \sigma^\ell$.

\begin{figure}[t]
  \centering
  \begin{tikzpicture}[x=21.5pt, y=15pt,
    font=\tiny,
    bounds/.style={thin,gray},
    box/.style={draw, thick, fit={#1}},
    box0/.style={box={#1}, inner sep=10pt, rounded corners=3pt},
    box1/.style={box={#1}, inner sep=8pt, rounded corners=3pt},
    box2/.style={box={#1}, inner sep=6pt, rounded corners=3pt},
    box3/.style={box={#1}, inner sep=4pt, rounded corners=3pt}]
    \def\downmargin{-1}
    \def\arrowlength{.3}
    \def\yaxis#1{\foreach \y in {0, ..., #1}
                   \node at (-.65, \y) {\y};}
    \def\value(#1,#2)#3{
      \draw[fill] (#1, #2) circle (1pt);
      \node at (#1, \downmargin) {#3};
      }
    \def\nbox#1(#2,#3;#4,#5){\node[box#1={(#2,#3) (#4,#5)}] {};}
    \matrix[row sep=10pt] {
      \yaxis2
      \value(0,0){0000}
      \value(1,0){1000}
      \nbox3(0,0;1,0);
      \value(2,0){0100}
      \value(3,0){1100}
      \nbox3(2,0;3,0);
      \nbox2(0,0;3,0);
      \value(4,1){0010}
      \value(5,1){1010}
      \nbox3(4,1;5,1);
      \value(6,1){0110}
      \value(7,1){1110}
      \nbox3(6,1;7,1);
      \nbox2(4,1;7,1);
      \nbox1(0,0;7,1);
      \value(8,1){0001}
      \value(9,1){1001}
      \nbox3(8,1;9,1);
      \value(10,2){0101}
      \value(11,1){1101}
      \nbox3(10,1;11,2);
      \nbox2(8,1;11,2);
      \value(12,1){0011}
      \value(13,2){1011}
      \nbox3(12,1;13,2);
      \value(14,2){0111}
      \value(15,2){1111}
      \nbox3(14,2;15,2);
      \nbox2(12,1;15,2);
      \nbox1(8,1;15,2);
      \nbox0(0,0;15,2);
      \\
      \yaxis2
      \value(0,0){0000}
      \value(1,0){1000}
      \nbox3(0,0;1,0);
      \value(2,0){0100}
      \value(3,0){1100}
      \nbox3(2,0;3,0);
      \nbox2(0,0;3,0);
      \value(4,0){0010}
      \value(5,0){1010}
      \nbox3(4,0;5,0);
      \value(6,1){0110}
      \value(7,1){1110}
      \nbox3(6,1;7,1);
      \nbox2(4,0;7,1);
      \nbox1(0,0;7,1);
      \value(8,1){0001}
      \value(9,1){1001}
      \nbox3(8,1;9,1);
      \value(10,0){0101}
      \value(11,1){1101}
      \nbox3(10,0;11,1);
      \nbox2(8,0;11,1);
      \value(12,2){0011}
      \value(13,1){1011}
      \nbox3(12,1;13,2);
      \value(14,2){0111}
      \value(15,2){1111}
      \nbox3(14,2;15,2);
      \nbox2(12,1;15,2);
      \nbox1(8,0;15,2);
      \nbox0(0,0;15,2);
      \\
    };
  \end{tikzpicture}\vspace*{-1mm}
  \caption{Two incomparable flows, $m(0101, 2)$ and $m(0011, 2)$.}
  \label{fig:incomparable}
\end{figure}

\begin{figure}[ht]
\vspace*{2mm}
  \centering
  \includegraphics{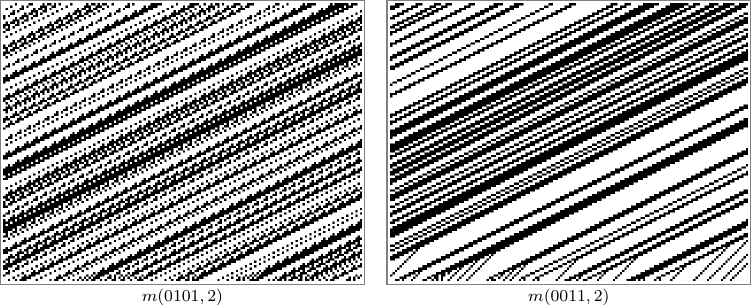}\vspace*{-1mm}
  \caption{The flows $m(0101,2)$ and $m(0011, 2)$ at density 0.4.}
  \label{fig:incomparable-evolution}
\end{figure}
\item What kind of lattices are the flow sets $\F(\ell, \Sigma)$?

  We can see in Figure~\ref{fig:elementary-conserving-automata} that
  $\F(2, \{0, 1\})$ is a linear order. But there are larger values of
  $\ell$ for which $\F(\ell, \{0, 1\})$ does contain incomparable
  elements. One example is shown in Figure~\ref{fig:incomparable}
  and~\ref{fig:incomparable-evolution}.

\item What does the theory for more than one kind of particle look
  like?

  There are two types of multi-particle automata. In automata of the
  first type, each cell contains several containers, each for one kind
  of particle, and the particles only move between ``their''
  containers. This kind of automaton has a theory that is a
  straightforward extension of the one-particle theory.

  In automata of the second type, there is only one container in each
  cell and particles of different ``colours'' that move between the
  the cells. Here the theory will be more complex.

\item What about particles in higher dimensions?

  The derivation of Theorem~\ref{thm:flow-construction} only works in
  one dimension. For higher-dimensional cellular automata therefore
  new ideas are needed.

\item Are there practical or theoretical applications for this theory?

  With practical applications I mean e.\,g.\ simulations of physical
  systems. A theoretical application could be the construction of a
  universal number-conserving cellular automaton or something similar.

\item Can the theory be simplified?

  In the proofs and calculations of this paper, a small number of
  types of inequalities are used over and over again. Is there a
  theory with which the repetitions can be compressed into a few
  lemmas at the beginning, such that the actual proofs take only (say)
  two pages? This would also be helpful for multi-particle and
  higher-dimensional systems.
\end{enumerate}

\paragraph{Acknowledgements} Barbara Wolnik provided helpful comments
on an earlier version of this article. A very thorough anonymous
reviewer also contributed much to the quality of the text.


\begin{thebibliography}{10}
\providecommand{\url}[1]{\texttt{#1}}
\providecommand{\urlprefix}{URL }
\expandafter\ifx\csname urlstyle\endcsname\relax
  \providecommand{\doi}[1]{doi:\discretionary{}{}{}#1}\else
  \providecommand{\doi}{doi:\discretionary{}{}{}\begingroup
  \urlstyle{rm}\Url}\fi
\providecommand{\eprint}[2][]{\url{#2}}

\bibitem{HattoriTakesue1991}
Hattori T, Takesue S.
\newblock Additive conserved quantities in discrete-time lattice dynamical
  systems.
\newblock \emph{Physica D}, 1991.
\newblock \textbf{49}(3):295--322.  doi:10.1016/0167-2789(91)90150-8.

\bibitem{Pivato2002}
Pivato M.
\newblock Conservation Laws in Cellular Automata.
\newblock \emph{Nonlinearity}, 2002.
\newblock \textbf{6}:1781.  doi:10.1088/0951-7715/15/6/305.

\bibitem{ImaiAlhazov2010}
Imai K, Alhazov A.
\newblock On Universality of Radius 1/2 Number-Conserving Cellular Automata.
\newblock In: Calude CS, Hagiya M, Morita K, Rozenberg G, Timmis J (eds.),
  Unconventional Computation. Springer Verlag, Berlin, Heidelberg.
\newblock  2010 pp. 45--55.  ISBN-978-3-642-13523-1.

\bibitem{BhattacharjeeNaskarEtAl2016}
Bhattacharjee K, Naskar N, Roy S, Das S.
\newblock A Survey of Cellular Automata: Types, Dynamics, Non-uniformity and
  Applications.
\newblock \emph{Natural Computing}, 2016.
\newblock pp. 1--29.
\newblock \doi{10.1007/s11047-018-9696-8}.

\bibitem{WolnikEtAl2019}
Wolnik B, Nenca A, Baetens JM, {De Baets} B.
\newblock A split-and-perturb decomposition of number-conserving cellular
  automata.
\newblock \emph{Physica D}, 2020.
\newblock \textbf{413}:132645.  doi:10.1016/j.physd.2020.132645.

\bibitem{Boccara1998}
Boccara N, Fuk\'s H.
\newblock Cellular automaton rules conserving the number of active sites.
\newblock \emph{Journal of Physics A}, 1998.
\newblock \textbf{31}:6007--6018.  doi:10.1088/0305-4470/31/28/014.

\bibitem{Fuks2000}
Fuk\'s H.
\newblock A class of cellular automata equivalent to deterministic particle
  systems.
\newblock In: Feng S, Lawniczak AT, Varadhan SRS (eds.), Hydrodynamic Limits
  and Related Topics. AMS, Providence, RI, 2000 pp. 57--69.

\bibitem{DurandFormentiEtAl2003}
Durand B, Formenti E, R\'oka Z.
\newblock Number-conserving cellular automata {I}: decidability.
\newblock \emph{Theoretical Computer Science}, 2003.
\newblock \textbf{299}(1-3):523--535.  doi:10.1016/S0304-3975(02)00534-0.

\bibitem{Wolfram1983}
Wolfram S.
\newblock Statistical Mechanics of Cellular Automata.
\newblock \emph{Reviews of Modern Physics}, 1983.
\newblock \textbf{55}:601--644. doi:10.1103/RevModPhys.55.601.

\bibitem{Davey2002}
Davey BA, Priestley HA.
\newblock Introduction to Lattices and Order.
\newblock Cambridge University Press, {Second} edition, 2002.
ISBN-13:978-0521784511, 10:0521784514.
\end{thebibliography}

\end{document}